\providecommand{\U}[1]{\protect\rule{.1in}{.1in}}
\providecommand{\U}[1]{\protect\rule{.1in}{.1in}}
\newtheorem{theo}{Theorem}[section]
\newtheorem{prop}[theo]{Proposition}
\newtheorem{lem}[theo]{Lemma}
\newtheorem{exam}[theo]{Example}
\numberwithin{equation}{section}
\begin{document}

\title{Forward Smoothing using Sequential Monte Carlo }
\author{Pierre Del Moral\thanks{Centre INRIA Bordeaux et Sud-Ouest \& Institut de
Math\'{e}matiques de Bordeaux , Universit\'{e} de Bordeaux I, 351 cours de la
Lib\'{e}ration 33405 Talence cedex, France (Pierre.Del-Moral@inria.fr)} ,
Arnaud Doucet\thanks{The Institute of Statistical Mathematics, 4-6-7
Minami-Azabu, Minato-ku, Tokyo 106-8569, Japan and Department of Statistics \&
Department of Computer Science, University of British Columbia, 333-6356
Agricultural Road, Vancouver, BC, V6T 1Z2, Canada (Arnaud@stat.ubc.ca)} ,
Sumeetpal S. Singh\thanks{Department of Engineering, University of Cambridge,
Trumpington Street, CB2 1PZ, United Kingdom (sss40@cam.ac.uk)}}
\date{Technical Report CUED/F-INFENG/TR 638, Cambridge University.\\
First version September 27th 2009, Second version November 24th 2009, Third
version December 12th 2010.}
\maketitle

\begin{abstract}
Sequential Monte Carlo (SMC) methods are a widely used set of computational
tools for inference in non-linear non-Gaussian state-space models. We propose
a new SMC algorithm to compute the expectation of additive functionals
recursively. Essentially, it is an online or \textquotedblleft
forward-only\textquotedblright\ implementation of a forward filtering backward
smoothing SMC algorithm proposed in \cite{Douc00}. Compared to the standard
path space SMC\ estimator whose asymptotic variance increases quadratically
with time even under favourable mixing assumptions, the asymptotic variance of
the proposed SMC\ estimator only increases linearly with time. This forward
smoothing procedure allows us to implement on-line maximum likelihood
parameter estimation algorithms which do not suffer from the particle path
degeneracy problem.

\emph{Some key words}: Expectation-Maximization, Forward Filtering Backward
Smoothing, Recursive Maximum Likelihood, Sequential Monte Carlo, Smoothing,
State-Space Models.

\end{abstract}

\section{Introduction\label{sec:intro}}

\subsection{State-space models and inference aims}

State-space models (SSM) are a very popular class of non-linear and
non-Gaussian time series models in statistics, econometrics and information
engineering; see for example \cite{cappe2005}, \cite{Douc01},
\cite{DurbinKoopman2001}. An SSM is comprised of a pair of discrete-time
stochastic processes, $\left\{  X_{n}\right\}  _{n\mathbb{\geq}0}$ and
$\left\{  Y_{n}\right\}  _{n\geq0}$, where the former is an $\mathcal{X}%
$-valued unobserved process and the latter is a $\mathcal{Y}$-valued process
which is observed. The hidden process $\left\{  X_{n}\right\}  _{n\mathbb{\geq
}0}$ is a\ Markov process with initial density $\mu_{\theta}\left(  x\right)
$ and Markov transition density $f_{\theta}\left(  \left.  x^{\prime
}\right\vert x\right)  $, i.e.%
\begin{equation}
X_{0}\sim\mu_{\theta}\left(  \cdot\right)  \text{ and }\left.  X_{n}%
\right\vert \left(  X_{n-1}=x_{n-1}\right)  \sim f_{\theta}\left(  \left.
\cdot\right\vert x_{n-1}\right)  ,\qquad n\geq1. \label{eq:evol}%
\end{equation}
It is assumed that the observations $\left\{  Y_{n}\right\}  _{n\geq0}%
$\ conditioned upon $\left\{  X_{n}\right\}  _{n\mathbb{\geq}0}$ are
statistically independent and have marginal density $g_{\theta}\left(  \left.
y\right\vert x\right)  $, i.e.%
\begin{equation}
\left.  Y_{n}\right\vert \left(  \left\{  X_{k}\right\}  _{k\geq0}=\left\{
x_{k}\right\}  _{k\geq0}\right)  \sim g_{\theta}\left(  \left.  \cdot
\right\vert x_{n}\right)  . \label{eq:obs}%
\end{equation}
We also assume that $\mu_{\theta}\left(  x\right)  $, $f_{\theta}\left(
\left.  x\right\vert x^{\prime}\right)  $ and $g_{\theta}\left(  \left.
y\right\vert x\right)  $ are densities with respect to (w.r.t.) suitable
dominating measures denoted generically as $dx$ and $dy$. For example, if
$\mathcal{X}\subseteq\mathbb{R}^{p}$ and $\mathcal{Y}\subseteq\mathbb{R}^{q}$
then the dominating measures could be the Lebesgue measures. The variable
$\theta$\ in the densities of these random variables are the particular
parameters of the model. The set of possible values for $\theta$ is denoted
$\Theta$. The model (\ref{eq:evol})-(\ref{eq:obs}) is also referred to as a
Hidden Markov Model (HMM) in the literature.

For any sequence $\left\{  z_{n}\right\}  _{n\in\mathbb{Z}}$ and integers
$j\geq i$, let $z_{i:j}$ denote the set $\left\{  z_{i},z_{i+1},...,z_{j}%
\right\}  $. (When $j<i$\ this is to be understood as the empty set.)
Equations (\ref{eq:evol}) and (\ref{eq:obs}) define the joint density of
$\left(  X_{0:n},Y_{0:n}\right)  $,%
\begin{equation}
p_{\theta}\left(  x_{0:n},y_{0:n}\right)  =\mu_{\theta}\left(  x_{0}\right)
\prod\limits_{k=1}^{n}f_{\theta}\left(  \left.  x_{k}\right\vert
x_{k-1}\right)  \prod\limits_{k=0}^{n}g_{\theta}\left(  \left.  y_{k}%
\right\vert x_{k}\right)  \label{eq:jointdensity}%
\end{equation}
which yields the marginal likelihood,
\begin{equation}
p_{\theta}\left(  y_{0:n}\right)  =\int p_{\theta}\left(  x_{0:n}%
,y_{0:n}\right)  dx_{0:n}. \label{eq:decompositionjointdistribution}%
\end{equation}
Let $s_{k}:\mathcal{X}\times\mathcal{X}\rightarrow\mathbb{R}$, $k\in
\mathbb{N}$, be a sequence of functions and $S_{n}:\mathcal{X}^{n}%
\rightarrow\mathbb{R}$, $n\in\mathbb{N}$, be the corresponding sequence of
additive functionals constructed from $s_{k}$ as
follows\footnote{Incorporating dependency of $s_{k}$\ on $y_{k}$, i.e. $S_{n}$
is the sums of term of the form $s_{k}\left(  x_{k-1},x_{k},y_{k}\right)  $,
is merely a matter of redefining $s_{k}$ in the computations to follow.}%
\begin{equation}
S_{n}\left(  x_{0:n}\right)  =\sum_{k=1}^{n}s_{k}\left(  x_{k-1},x_{k}\right)
. \label{eq:additiveFunctional}%
\end{equation}
\ There are many instances where it is necessary to be able to compute the
following expectations recursively in time,%
\begin{equation}
\mathcal{S}_{n}^{\theta}=\mathbb{E}_{\theta}\left[  \left.  S_{n}\left(
X_{0:n}\right)  \right\vert y_{0:n}\right]  .
\label{eq:representationadditivefunctional}%
\end{equation}
The conditioning implies the expectation should be computed w.r.t. the density
of $X_{0:n}$ given $Y_{0:n}=y_{0:n}$, i.e. $p_{\theta}\left(  \left.
x_{0:n}\right\vert y_{0:n}\right)  \propto p_{\theta}\left(  x_{0:n}%
,y_{0:n}\right)  $ and for this reason $\mathcal{S}_{n}^{\theta}$ is referred
to as a \emph{smoothed additive functional}.

As the first example of the need to perform such computations, consider the
problem of computing the score vector, $\nabla\log p_{\theta}\left(
y_{0:n}\right)  $. The score is a vector in $\mathbb{R}^{d}$ and its
$i^{\text{th}}$ component is%
\begin{equation}
\left[  \nabla\log p_{\theta}\left(  y_{0:n}\right)  \right]  _{i}%
=\frac{\partial\log\text{ }p_{\theta}\left(  y_{0:n}\right)  }{\partial
\theta^{i}}. \label{eq:score}%
\end{equation}
Using Fisher's identity, the problem of computing the score becomes an
instance of (\ref{eq:representationadditivefunctional}), i.e.%
\begin{align}
\nabla\log p_{\theta}\left(  y_{0:n}\right)   &  =\sum\limits_{k=1}%
^{n}\mathbb{E}_{\theta}\left[  \left.  \nabla\log f_{\theta}\left(  \left.
X_{k}\right\vert X_{k-1}\right)  \right\vert y_{0:n}\right]  +\sum
\limits_{k=0}^{n}\mathbb{E}_{\theta}\left[  \left.  \nabla\log g_{\theta
}\left(  \left.  y_{k}\right\vert X_{k}\right)  \right\vert y_{0:n}\right]
\nonumber\\
&  +\mathbb{E}_{\theta}\left[  \left.  \nabla\log\mu_{\theta}\left(
X_{0}\right)  \right\vert y_{0:n}\right]  . \label{eq:scoreAdditiveFunctional}%
\end{align}
An alternative representation of the score as a smoothed additive functional
based on infinitesimal perturbation analysis is given in \cite{coquelin2008}.
The score has applications to Maximum Likelihood (ML)\ parameter estimation
\cite{legland1997}, \cite{poyadjis2009}.

The second example is ML\ parameter estimation using the
Expectation-Maximization (EM) algorithm. Let $y_{0:n}$ be a batch of data and
the aim is to maximise $p_{\theta}\left(  y_{0:n}\right)  $ w.r.t. $\theta$.
Given a current estimate $\theta^{\prime}$, a new estimate $\theta
^{\prime\prime}$ is obtained by maximizing the function\
\begin{align*}
Q\left(  \theta^{\prime},\theta\right)   &  =\sum\limits_{k=1}^{n}%
\mathbb{E}_{\theta^{\prime}}\left[  \left.  \log f_{\theta}\left(  \left.
X_{k}\right\vert X_{k-1}\right)  \right\vert y_{0:n}\right]  +\sum
\limits_{k=0}^{n}\mathbb{E}_{\theta^{\prime}}\left[  \left.  \log g_{\theta
}\left(  \left.  y_{k}\right\vert X_{k}\right)  \right\vert y_{0:n}\right] \\
&  +\mathbb{E}_{\theta^{\prime}}\left[  \left.  \log\mu_{\theta}\left(
X_{0}\right)  \right\vert y_{0:n}\right]
\end{align*}
w.r.t. $\theta$ and setting $\theta^{\prime\prime}$\ to the maximising
argument. A fundamental property of the EM algorithm\ is $p_{\theta
^{\prime\prime}}\left(  y_{0:n}\right)  \geq p_{\theta^{\prime}}\left(
y_{0:n}\right)  $. For linear Gaussian models and finite state-space HMM, it
is possible to perform the computations in the definition of $Q\left(
\theta^{\prime},\theta\right)  $. For general non-linear non-Gaussian
state-space models of the form (\ref{eq:evol})-(\ref{eq:obs}), we need to rely
on numerical approximation schemes.

\subsection{Current approaches to smoothing with SMC\label{sec:litReview}}

SMC methods are a class of algorithms that sequentially\ approximate the
sequence of posterior distributions $\left\{  p_{\theta}\left(  dx_{0:n}%
|y_{0:n}\right)  \right\}  _{n\geq0}$ using a set of $N$ weighted random
samples called particles. Specifically, the SMC\ approximation of $p_{\theta
}\left(  dx_{0:n}|y_{0:n}\right)  $, for $n\geq0$, is
\begin{equation}
\widehat{p}_{\theta}\left(  dx_{0:n}|y_{0:n}\right)  :=\sum_{i=1}^{N}%
W_{n}^{\left(  i\right)  }\delta_{X_{0:n}^{\left(  i\right)  }}\left(
dx_{0:n}\right)  ,\text{ }W_{n}^{\left(  i\right)  }\geq0,\text{\ }\sum
_{i=1}^{N}W_{n}^{\left(  i\right)  }=1, \label{eq:filteringdistribution}%
\end{equation}
where $W_{n}^{\left(  i\right)  }$ is the importance weight associated to
particle $X_{0:n}^{\left(  i\right)  }$ and $\delta_{X_{0:n}^{\left(
i\right)  }}$\ is the Dirac measure with an atom at $X_{0:n}^{\left(
i\right)  }$. The particles are propagated forward in time using a combination
of importance sampling and resampling steps and there are several variants of
both these steps; see \cite{delmoral2004}, \cite{Douc01} for details.
SMC\ methods are parallelisable and flexible, the latter in the sense that SMC
approximations of the posterior densities for a variety of SSMs can be
constructed quite easily. SMC methods were popularized by the many successful
applications to SSM.

\subsubsection{Path space and fixed-lag approximations}

A SMC\ approximation of $\mathcal{S}_{n}^{\theta}$ may be constructed by
replacing $p_{\theta}\left(  dx_{0:n}|y_{0:n}\right)  $ in Eq.
(\ref{eq:representationadditivefunctional}) with its SMC\ approximation in Eq.
(\ref{eq:filteringdistribution}) - we call this the \emph{path space} method
since the SMC\ approximation of $p_{\theta}\left(  dx_{0:n}|y_{0:n}\right)  $,
which is a probability distribution on $\mathcal{X}^{n+1}$, is used.
Fortunately there is no need to store the entire ancestry of each particle,
i.e. $\left\{  X_{0:n}^{\left(  i\right)  }\right\}  _{1\leq i\leq N}$, which
would require a growing memory. Also, this estimate can be computed
recursively. However, the reliance on the approximation of the joint
distribution $p_{\theta}\left(  \left.  dx_{0:n}\right\vert y_{0:n}\right)  $
is bad. It is well-known in the SMC\ literature that the approximation of
$p_{\theta}\left(  \left.  dx_{0:n}\right\vert y_{0:n}\right)  $\ becomes
progressively impoverished as $n$ increases because of the successive
resampling steps \cite{ADT05}, \cite{delmoraldoucet2003}, \cite{olsson2008}.
That is, the number of distinct samples representing $p_{\theta}\left(
\left.  dx_{0:k}\right\vert y_{0:n}\right)  $\ for any fixed $k<n$ diminishes
as $n$ increases -- this is known as the \emph{particle path degeneracy}
problem. Hence, whatever being the number of particles, $p_{\theta}\left(
\left.  dx_{0:k}\right\vert y_{0:n}\right)  $ will eventually be approximated
by a single unique particle for all (sufficiently large) $n$. This has severe
consequences for the SMC\ estimate $\mathcal{S}_{n}^{\theta}$. In
\cite{delmoraldoucet2003}, under favourable mixing assumptions, the authors
established an upper bound on the $\mathbb{L}^{p}$ error which is proportional
to $n^{2}/\sqrt{N}$. Under similar assumptions, it was shown in
\cite{poyadjis2009} that the asymptotic variance of this
estimate\ increases\ at least quadratically with $n$. To reduce the variance,
alternative methods have been proposed. The technique proposed in
\cite{kitagawa2001} relies on the fact that for a SSM with \textquotedblleft
good\textquotedblright\ forgetting properties,
\begin{equation}
p_{\theta}\left(  \left.  x_{0:k}\right\vert y_{0:n}\right)  \approx
p_{\theta}\left(  \left.  x_{0:k}\right\vert y_{0:\min(k+\Delta,n)}\right)
\label{eq:forgetting}%
\end{equation}
when the horizon $\Delta$ is large enough; that is observations collected
after times $k+\Delta$ bring little additional information about $X_{0:k}$.
(See \cite[Corollary 2.9]{DeG01} for exponential error bounds.) This suggests
that a very simple scheme to curb particle degeneracy is to stop updating the
SMC\ estimate beyond time $k+\Delta$. This algorithm is trivial to implement
but the main practical problem is that of determining an appropriate value for
$\Delta$\ such that the two densities in Eq. (\ref{eq:forgetting}) are close
enough and particle degeneracy is low. These are conflicting requirements. A
too small value for the horizon will result in $p_{\theta}\left(  \left.
x_{0:k}\right\vert y_{0:\min(k+\Delta,n)}\right)  $ being a poor approximation
of $p_{\theta}\left(  \left.  x_{0:k}\right\vert y_{0:n}\right)  $ but the
particle degeneracy will be low. On the other hand, a larger horizon improves
the approximation in Eq. (\ref{eq:forgetting})\ but particle degeneracy will
creep in. Automating the selection of $\Delta$ is difficult. Additionally, for
any finite $\Delta$ the SMC\ estimate of $\mathcal{S}_{n}^{\theta}$ will
suffer from a non vanishing bias even as $N\rightarrow\infty$. In
\cite{olsson2008}, for an optimized value of $\Delta$ which is dependent on
$n$ and the typically unknown mixing properties of the model, the
SMC\ estimates of $\mathcal{S}_{n}^{\theta}$ based on the approximation in Eq.
(\ref{eq:forgetting}) were shown to have an $\mathbb{L}^{p}$ error and bias
upper bounded by quantities proportional to $n\log n/\sqrt{N}$ and $n\log n/N$
under regularity assumptions.

The computational cost of the SMC\ approximation of $\mathcal{S}_{n}^{\theta}%
$\ computed using either the path space method or the truncated horizon method
of \cite{kitagawa2001} is $\mathcal{O}\left(  N\right)  $.

\subsubsection{Approximating the smoothing equations}

A standard alternative to computing $\mathcal{S}_{n}^{\theta}$ is to use
SMC\ approximations of fixed-interval smoothing techniques such as the Forward
Filtering Backward Smoothing (FFBS) algorithm \cite{Douc00}, \cite{godsill}.
Theoretical results on the SMC\ approximations of the FFBS\ algorithm have
been recently established in \cite{douc09}; this includes a central limit
theorem and exponential deviation inequalities. In particular, under
appropriate mixing assumptions, the authors have obtained time-uniform
deviation inequalities for the SMC-FFBS\ approximations of the marginals
$\left\{  p_{\theta}\left(  dx_{k}|y_{0:n}\right)  \right\}  _{0\leq k\leq n}$
\cite[Section 5]{douc09}; see \cite{delmoral2009} for alternative proofs and
complementary results. Let $\widehat{\mathcal{S}}_{n}^{\theta}$ denote the
SMC-FFBS\ estimate of $\mathcal{S}_{n}^{\theta}$. In this work it is
established that the asymptotic variance of $\sqrt{N}\left(  \widehat
{\mathcal{S}}_{n}^{\theta}-\mathcal{S}_{n}^{\theta}\right)  $ only grows
linearly with time $n$; a fact which was also alluded to in
\cite{delmoral2009}. The main advantage of the SMC\ implementation of the
FFBS\ algorithm is that it does not have any tuning parameter other than the
number of particles $N$. However, the improved theoretical properties comes at
a computational price; this algorithm has a computational complexity of
$\mathcal{O}\left(  N^{2}\right)  $ compared to $\mathcal{O}\left(  N\right)
$ for the methods previously discussed. (It is possible to use fast
computational methods to reduce the computational cost to $\mathcal{O}\left(
N\log N\right)  $ \cite{Klass2005}.) Another restriction is that the SMC
implementation of the FFBS\ algorithm does not yield an online algorithm.

\subsection{Contributions and organization of the article}

The contributions of this article are as follows.

\begin{itemize}
\item We propose an original online\ implementation of the SMC-FFBS estimate
of $\mathcal{S}_{n}^{\theta}$. A\ particular case of this new algorithm was
presented in \cite{poyadjis2006}, \cite{poyadjis2009} to compute the score
vector (\ref{eq:score}). However, because it was catered to estimating the
score, the authors failed to realise its full generality.

\item An upper bound for the \emph{non-asymptotic} mean square error of the
SMC-FFBS\ estimate $\widehat{\mathcal{S}}_{n}^{\theta}$ of $\mathcal{S}%
_{n}^{\theta}$ is derived under regularity assumptions. It follows from this
bound that the asymptotic variance of $\sqrt{N}\left(  \widehat{\mathcal{S}%
}_{n}^{\theta}-\mathcal{S}_{n}^{\theta}\right)  $ is bounded by a quantity
proportional to $n$. This complements results recently obtained in
\cite{delmoral2009}, \cite{douc09}.

\item We demonstrate how the online implementation\ of the SMC-FFBS estimate
of $\mathcal{S}_{n}^{\theta}$ can be applied to the problem of recursively
estimating the parameters of a SSM from data. We present original
SMC\ implementations of Recursive Maximum Likelihood (RML) \cite{legland1997},
\cite{poyadjis2006}, \cite{titterington1984} and of the online EM\ algorithm
\cite{ford98}, \cite{elliott2000}, \cite{elliott2002}, \cite{mongillo2008},
\cite{Cap09}, \cite[Section 3.2.]{kantas09}. These SMC implementations do not
suffer from the particle path degeneracy problem.
\end{itemize}

The remainder of this paper is organized as follows. In Section
\ref{sec:smoothedadditivefunctionals} the standard FFBS\ recursion and its
SMC\ implementation is presented. It is then shown how this recursion and its
SMC\ implementation can be implemented exactly with only a forward pass. A
non-asymptotic variance bound is presented in Section \ref{sec:theory}.
Recursive parameter estimation procedures are presented in Section
\ref{sec:experiments}\ and numerical results are given in Section
\ref{sec:simulations}. We conclude in Section \ref{sec:discussion} and the
proof of the main theoretical result is given in the Appendix.

\section{Forward smoothing and SMC\ approximations
\label{sec:smoothedadditivefunctionals}}

We first review the standard FFBS\ recursion and its SMC\ approximation
\cite{Douc00}, \cite{godsill}. This is then followed by a derivation of a
forward-only\ version of the FFBS\ recursion and its corresponding
SMC\ implementation. The algorithms presented in this section do not depend on
any specific SMC\ implementation to approximate $\left\{  p_{\theta}\left(
dx_{n}|y_{0:n}\right)  \right\}  _{n\geq0}$.

\subsection{The forward filtering backward smoothing recursion
\label{sec:FFBS}}

Recall the definition of $\mathcal{S}_{n}^{\theta}$\ in Eq.
(\ref{eq:representationadditivefunctional}). The standard FFBS procedure to
compute $\mathcal{S}_{n}^{\theta}$ proceeds in two steps. In the first step,
which is the forward pass, the filtering densities $\{p_{\theta}\left(
\left.  x_{k}\right\vert y_{0:k}\right)  \}_{0\leq k\leq n}$ are computed
using Bayes' formula:%
\[
p_{\theta}\left(  \left.  x_{k+1}\right\vert y_{0:k+1}\right)  =\frac
{g_{\theta}\left(  \left.  y_{k+1}\right\vert x_{k+1}\right)  \int f_{\theta
}\left(  \left.  x_{k+1}\right\vert x_{k}\right)  p_{\theta}\left(  \left.
x_{k}\right\vert y_{0:k}\right)  dx_{k}}{\int g_{\theta}\left(  \left.
y_{k+1}\right\vert x_{k+1}^{\prime}\right)  f_{\theta}\left(  \left.
x_{k+1}^{\prime}\right\vert x_{k}^{\prime}\right)  p_{\theta}\left(  \left.
x_{k}^{\prime}\right\vert y_{0:k}\right)  dx_{k:k+1}^{\prime}}.
\]
The second step is the backward pass that computes the following marginal
smoothed densities which are needed to evaluate each term in the sum that
defines $\mathcal{S}_{n}^{\theta}$:%
\begin{equation}
p_{\theta}\left(  \left.  x_{k-1},x_{k}\right\vert y_{0:n}\right)  =p_{\theta
}\left(  \left.  x_{k}\right\vert y_{0:n}\right)  p_{\theta}\left(  \left.
x_{k-1}\right\vert y_{0:k-1},x_{k}\right)  ,\quad1\leq k\leq n.
\label{eq:standardforwardbackward}%
\end{equation}
where
\begin{equation}
p_{\theta}\left(  \left.  x_{k-1}\right\vert y_{0:k-1},x_{k}\right)
=\frac{f_{\theta}\left(  \left.  x_{k}\right\vert x_{k-1}\right)  p_{\theta
}\left(  \left.  x_{k-1}\right\vert y_{0:k-1}\right)  }{p_{\theta}\left(
\left.  x_{k}\right\vert y_{0:k-1}\right)  }. \label{eq:conditionaldensity}%
\end{equation}
We compute Eq. (\ref{eq:standardforwardbackward}) commencing at $k=n$ and
then, decrementing $k$ each time, until $k=1$. (Integrating Eq.
(\ref{eq:standardforwardbackward})\ w.r.t. $x_{k}$ will yield $p_{\theta
}\left(  \left.  x_{k-1}\right\vert y_{0:n}\right)  $ which is needed for the
next computation.) To compute $\mathcal{S}_{n}^{\theta}$, $n$ backward steps
must be executed and then $n$ expectations computed. This must then be
repeated at time $n+1$ to incorporate the effect of the new observation
$y_{n+1}$ on these calculations. Clearly this is not an online procedure for
computing $\{\mathcal{S}_{n}^{\theta}\}_{n\geq1}$.

The SMC\ implementation of the FFBS\ recursion is straightforward
\cite{Douc00}. In the forward pass, we compute and store the SMC approximation
$\widehat{p}_{\theta}\left(  \left.  dx_{k}\right\vert y_{0:k}\right)  $ of
$p_{\theta}\left(  \left.  dx_{k}\right\vert y_{0:k}\right)  $ for
$k=0,1,\ldots,n$. In the backward pass, we simply substitute this
SMC\ approximation in the place of $p_{\theta}\left(  \left.  dx_{k}%
\right\vert y_{0:k}\right)  $\ in Eq. (\ref{eq:standardforwardbackward}). Let
\begin{equation}
\widehat{p}_{\theta}\left(  \left.  dx_{k}\right\vert y_{0:n}\right)
=\sum_{i=1}^{N}W_{k|n}^{\left(  i\right)  }\delta_{X_{k}^{\left(  i\right)  }%
}\left(  dx_{k}\right)  \label{eq:smoothingapproximationsFFBS}%
\end{equation}
be the SMC approximation of $p_{\theta}\left(  \left.  dx_{k}\right\vert
y_{0:n}\right)  $, $k\leq n$, initialised at $k=n$ by setting $W_{n|n}%
^{(i)}=W_{n}^{(i)}$. By substituting $\widehat{p}_{\theta}\left(  \left.
dx_{k-1}\right\vert y_{0:k-1}\right)  $ for $p_{\theta}\left(  \left.
x_{k-1}\right\vert y_{0:k-1}\right)  $ in Eq. (\ref{eq:conditionaldensity}),
we obtain
\begin{equation}
\widehat{p}_{\theta}\left(  \left.  dx_{k-1}\right\vert y_{0:k-1}%
,x_{k}\right)  =\frac{\sum_{i=1}^{N}W_{k-1}^{(i)}f_{\theta}\left(
x_{k}|X_{k-1}^{(i)}\right)  \delta_{X_{k-1}^{(i)}}\left(  dx_{k-1}\right)
}{\sum_{l=1}^{N}W_{k-1}^{(l)}f_{\theta}\left(  x_{k}|X_{k-1}^{(l)}\right)  }.
\label{eq:MCconditionaldensity}%
\end{equation}
This approximation is combined with $\widehat{p}_{\theta}\left(  \left.
dx_{k}\right\vert y_{0:n}\right)  $\ (see Eq.
(\ref{eq:standardforwardbackward})) to obtain
\begin{equation}
\widehat{p}_{\theta}(dx_{k-1:k}|y_{0:n})=\sum_{i=1}^{N}\sum_{j=1}^{N}%
W_{k|n}^{(j)}\frac{W_{k-1}^{(i)}f_{\theta}\left(  X_{k}^{(j)}|X_{k-1}%
^{(i)}\right)  }{\sum_{l=1}^{N}W_{k-1}^{(l)}f_{\theta}\left(  X_{k}%
^{(j)}|X_{k-1}^{(l)}\right)  }\delta_{X_{k-1}^{(i)},X_{k}^{\left(  j\right)
}}(dx_{k-1:k}). \label{eq:SMCapproximationsmoothingpairwise}%
\end{equation}
Marginalising this approximation will give the approximation\ to $\widehat
{p}_{\theta}\left(  \left.  dx_{k-1}\right\vert y_{0:n}\right)  $, that is
$\left\{  W_{k-1|n}^{\left(  i\right)  },X_{k-1}^{\left(  i\right)  }\right\}
_{1\leq i\leq N}$, where
\begin{equation}
W_{k-1|n}^{(i)}=\sum_{j=1}^{N}W_{k|n}^{(j)}\frac{W_{k-1}^{(i)}f_{\theta
}\left(  X_{k}^{(j)}|X_{k-1}^{(i)}\right)  }{\sum_{l=1}^{N}W_{k-1}%
^{(l)}f_{\theta}\left(  X_{k}^{(j)}|X_{k-1}^{(l)}\right)  }.
\label{eq:backwardweights}%
\end{equation}
The SMC\ estimate $\widehat{\mathcal{S}}_{n}^{\theta}$ of $\mathcal{S}%
_{n}^{\theta}$ is then given by
\begin{equation}
\widehat{\mathcal{S}}_{n}^{\theta}=\sum\limits_{k=1}^{n}\int s_{k}\left(
x_{k-1},x_{k}\right)  \widehat{p}_{\theta}(dx_{k-1:k}|y_{0:n}).
\label{eq:batchSMCestimate}%
\end{equation}
The backward recursion for the weights, given in Eq. (\ref{eq:backwardweights}%
), makes this an off-line algorithm for computing $\widehat{\mathcal{S}}%
_{n}^{\theta}$.

\subsection{A forward only version of the forward filtering backward smoothing
recursion\label{sec:forwardonlyFFBS}}

To circumvent the need for the backward pass in the computation of
$\mathcal{S}_{n}^{\theta}$, the following auxiliary function (on $\mathcal{X}%
$) is introduced,
\begin{equation}
T_{n}^{\theta}\left(  x_{n}\right)  :=\int S_{n}\left(  x_{0:n}\right)  \text{
}p_{\theta}\left(  \left.  x_{0:n-1}\right\vert y_{0:n-1},x_{n}\right)
dx_{0:n-1}. \label{eq:definitionTn}%
\end{equation}
It is apparent that
\begin{equation}
\mathcal{S}_{n}^{\theta}=\int T_{n}^{\theta}\left(  x_{n}\right)  p_{\theta
}\left(  \left.  x_{n}\right\vert y_{0:n}\right)  dx_{n}.
\label{eq:additivesmoothfunctionalsasfunctionofT}%
\end{equation}
The following proposition establishes a forward recursion to compute
$\{T_{n}^{\theta}\}_{n\geq0}$, which is henceforth referred to as the
\emph{forward smoothing recursion}. For sake of completeness, the proof of
this proposition is given.

\begin{prop}
\label{propDN} For $n\geq1$, we have%
\begin{equation}
T_{n}^{\theta}\left(  x_{n}\right)  =\int\left[  T_{n-1}^{\theta}\left(
x_{n-1}\right)  +s_{n}\left(  x_{n-1},x_{n}\right)  \right]  p_{\theta}\left(
\left.  x_{n-1}\right\vert y_{0:n-1},x_{n}\right)  dx_{n-1},
\label{eq:recursionadditivefunctional}%
\end{equation}
where $T_{0}^{\theta}\left(  x_{0}\right)  :=0$.
\end{prop}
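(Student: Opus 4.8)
The plan is to reduce the recursion to two elementary facts: the additive structure of $S_{n}$ and a one-step factorisation of the conditioned path density. First I would use that, by Eq. (\ref{eq:additiveFunctional}), $S_{n}(x_{0:n}) = S_{n-1}(x_{0:n-1}) + s_{n}(x_{n-1},x_{n})$. Substituting this into the definition (\ref{eq:definitionTn}) of $T_{n}^{\theta}(x_{n})$ splits the integral into an $S_{n-1}$ contribution and an $s_{n}$ contribution, both taken against $p_{\theta}\left(\left.x_{0:n-1}\right\vert y_{0:n-1},x_{n}\right)$.

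The key ingredient, and essentially the only real content of the argument, is the factorisation
\begin{equation}
p_{\theta}\left(\left.x_{0:n-1}\right\vert y_{0:n-1},x_{n}\right) = p_{\theta}\left(\left.x_{0:n-2}\right\vert y_{0:n-2},x_{n-1}\right)\, p_{\theta}\left(\left.x_{n-1}\right\vert y_{0:n-1},x_{n}\right), \label{eq:factpathdensity}
\end{equation}
valid for $n\geq 2$. I would verify (\ref{eq:factpathdensity}) directly from Bayes' rule using the joint density (\ref{eq:jointdensity}): the left-hand side equals $p_{\theta}(x_{0:n},y_{0:n-1})/p_{\theta}(x_{n},y_{0:n-1})$, while on the right the second factor is exactly the backward kernel (\ref{eq:conditionaldensity}) with $k=n$, and the first factor is $p_{\theta}(x_{0:n-1},y_{0:n-2})/p_{\theta}(x_{n-1},y_{0:n-2})$. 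Multiplying these two factors and performing elementary recombinations — absorbing $g_{\theta}(y_{n-1}|x_{n-1})$ and $f_{\theta}(x_{n}|x_{n-1})$ back into the joint densities and merging $p_{\theta}(y_{0:n-1})\,p_{\theta}(x_{n}|y_{0:n-1}) = p_{\theta}(x_{n},y_{0:n-1})$ — recovers the same ratio. The one point demanding care is the bookkeeping of the conditioning set: $x_{n}$ is held fixed and one conditions on $y_{0:n-1}$ rather than $y_{0:n}$, so the factor $g_{\theta}(y_{n}|x_{n})$ never appears.

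Given (\ref{eq:factpathdensity}), I would integrate out $x_{0:n-2}$ first. Since $s_{n}(x_{n-1},x_{n})$ does not depend on $x_{0:n-2}$ and $p_{\theta}\left(\left.x_{0:n-2}\right\vert y_{0:n-2},x_{n-1}\right)$ integrates to one, the $s_{n}$ contribution collapses to $\int s_{n}(x_{n-1},x_{n})\, p_{\theta}\left(\left.x_{n-1}\right\vert y_{0:n-1},x_{n}\right) dx_{n-1}$. For the $S_{n-1}$ contribution, the inner integral $\int S_{n-1}(x_{0:n-1})\, p_{\theta}\left(\left.x_{0:n-2}\right\vert y_{0:n-2},x_{n-1}\right) dx_{0:n-2}$ is by definition (\ref{eq:definitionTn}) equal to $T_{n-1}^{\theta}(x_{n-1})$. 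Adding the two contributions and integrating against $p_{\theta}\left(\left.x_{n-1}\right\vert y_{0:n-1},x_{n}\right) dx_{n-1}$ yields (\ref{eq:recursionadditivefunctional}). The base case $n=1$ is immediate from $T_{0}^{\theta}\equiv 0$, since then there is no $x_{0:n-2}$ to integrate out and the $S_{n-1}$ term vanishes.

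Thus the main obstacle is establishing (\ref{eq:factpathdensity}); everything else is the additive splitting, the normalisation of a conditional density to unity, and recognising the surviving inner integral as $T_{n-1}^{\theta}$. Intuitively (\ref{eq:factpathdensity}) states that, conditionally on $x_{n}$ and $y_{0:n-1}$, the trajectory $x_{0:n-1}$ inherits the backward Markov structure of the filter, so that prepending $x_{n-1}$ to $x_{0:n-2}$ factors cleanly.
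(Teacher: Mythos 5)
Your proof is correct and takes essentially the same route as the paper's: split $S_{n}(x_{0:n}) = S_{n-1}(x_{0:n-1}) + s_{n}(x_{n-1},x_{n})$, factorise $p_{\theta}\left(\left. x_{0:n-1}\right\vert y_{0:n-1},x_{n}\right) = p_{\theta}\left(\left. x_{0:n-2}\right\vert y_{0:n-2},x_{n-1}\right) p_{\theta}\left(\left. x_{n-1}\right\vert y_{0:n-1},x_{n}\right)$, and recognise the surviving inner integral as $T_{n-1}^{\theta}(x_{n-1})$. The only difference is that you verify the backward factorisation explicitly via Bayes' rule, whereas the paper invokes it silently in passing between its two displayed lines, so your write-up is if anything more complete.
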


\textbf{Proof.} The proof is straightforward
\begin{align*}
T_{n}^{\theta}\left(  x_{n}\right)   &  :=\int\left[  S_{n-1}\left(
x_{0:n-1}\right)  +s_{n}\left(  x_{n-1},x_{n}\right)  \right]  \text{
}p_{\theta}\left(  \left.  x_{0:n-1}\right\vert y_{0:n-1},x_{n}\right)
dx_{0:n-1}\\
&  =\int\left[  \int S_{n-1}\left(  x_{0:n-1}\right)  p_{\theta}\left(
\left.  x_{0:n-2}\right\vert y_{0:n-2},x_{n-1}\right)  dx_{0:n-2}\right]
p_{\theta}\left(  \left.  x_{n-1}\right\vert y_{0:n-1},x_{n}\right)
dx_{n-1}\\
&  +\int s_{n}\left(  x_{n-1},x_{n}\right)  \text{ }p_{\theta}\left(  \left.
x_{n-1}\right\vert y_{0:n-1},x_{n}\right)  dx_{n-1}.
\end{align*}
The integrand in the first equality is $S_{n}\left(  x_{0:n}\right)  $\ while
the integrand in the first integral of the second equality is $T_{n-1}%
^{\theta}\left(  x_{n-1}\right)  $. $\blacksquare$

This recursion is not new and is actually a special instance of dynamic
programming for Markov processes; see for example \cite{bertsekas78}. For a
fully observed Markov process with transition density $\left\{  f_{\theta
}\left(  \left.  x_{k}\right\vert x_{k-1}\right)  \right\}  _{k\geq1}$, the
dynamic programming recursion to compute the expectation of $S_{n}\left(
x_{0:n}\right)  $ with respect to the law of the Markov process is usually
implemented using a backward recursion going from time $n$ to time $0$. In the
partially observed scenario considered here, $\left\{  X_{k}\right\}  _{0\leq
k\leq n}$ conditional on $y_{0:n}$ is a \textquotedblleft
backward\textquotedblright\ Markov process with non-homogeneous transition
densities $\left\{  p_{\theta}\left(  \left.  x_{k-1}\right\vert
y_{0:k-1},x_{k}\right)  \right\}  _{1\leq k\leq n}$. Thus
(\ref{eq:recursionadditivefunctional}) is the corresponding dynamic
programming recursion to compute $\mathcal{S}_{n}^{\theta}$ with respect to
$p_{\theta}\left(  \left.  x_{0:n}\right\vert y_{0:n}\right)  $ for this
backward Markov chain. This recursion is the foundation of the online EM
algorithm and is described at length in \cite{Elliott1996} (pioneered in
\cite{zeitouni88}) where the density $p_{\theta}\left(  \left.  x_{n-1}%
\right\vert y_{0:n-1},x_{n}\right)  $ appearing in $T_{n}^{\theta}\left(
x_{n}\right)  $ is usually written as
\[
p_{\theta}\left(  \left.  x_{n-1}\right\vert y_{0:n-1},x_{n}\right)
=\frac{f_{\theta}\left(  x_{n}|x_{n-1}\right)  p_{\theta}\left(
x_{n-1},y_{0:n-1}\right)  }{\int f_{\theta}\left(  x_{n}|x_{n-1}\right)
p_{\theta}\left(  x_{n-1},y_{0:n-1}\right)  dx_{n-1}}%
\]
or as in Eq. (\ref{eq:conditionaldensity}) in \cite{Cap09}, \cite{Cap09B},
\cite{delmoral2009}. The forward smoothing recursion has been rediscovered
independently several times; see \cite{hernando05}, \cite{mongillo2008} for example.

A simple SMC\ scheme to approximate $\mathcal{S}_{n}^{\theta}$\ can be devised
by exploiting equations (\ref{eq:additivesmoothfunctionalsasfunctionofT}) and
(\ref{eq:recursionadditivefunctional}). This is summarised as Algorithm SMC-FS below.%

\noindent\hrulefill

\begin{center}
\textbf{Algorithm SMC-FS: Forward-only SMC\ computation of the FFBS estimate}
\end{center}

$\bullet$ \textsf{Assume at time }$n-1$\textsf{ that SMC approximations
}$\left\{  W_{n-1}^{\left(  i\right)  },X_{n-1}^{\left(  i\right)  }\right\}
_{1\leq i\leq N}$\textsf{\ of }$p_{\theta}\left(  \left.  dx_{n-1}\right\vert
y_{0:n-1}\right)  $\textsf{\ and }$\left\{  \widehat{T}_{n-1}^{\theta}\left(
X_{n-1}^{\left(  i\right)  }\right)  \right\}  _{1\leq i\leq N}$\textsf{\ of
}$\left\{  T_{n-1}^{\theta}\left(  X_{n-1}^{\left(  i\right)  }\right)
\right\}  _{1\leq i\leq N}$\textsf{\ are available.}

$\bullet$ \textsf{At time }$n$\textsf{, compute the SMC approximation
}$\left\{  W_{n}^{\left(  i\right)  },X_{n}^{\left(  i\right)  }\right\}
_{1\leq i\leq N}$\textsf{\ of }$p_{\theta}\left(  \left.  dx_{n}\right\vert
y_{0:n}\right)  $\textsf{\ and set}%
\begin{align}
\widehat{T}_{n}^{\theta}\left(  X_{n}^{\left(  i\right)  }\right)   &
=\frac{\sum_{j=1}^{N}W_{n-1}^{\left(  j\right)  }f_{\theta}\left(  X_{n}%
^{(i)}|X_{n-1}^{(j)}\right)  \left[  \widehat{T}_{n-1}^{\theta}\left(
X_{n-1}^{\left(  j\right)  }\right)  +s_{n}\left(  X_{n-1}^{\left(  j\right)
},X_{n}^{\left(  i\right)  }\right)  \right]  }{\sum_{j=1}^{N}W_{n-1}^{\left(
j\right)  }f_{\theta}\left(  X_{n}^{(i)}|X_{n-1}^{(j)}\right)  },\quad1\leq
i\leq N,\label{eq:Tapproximation}\\
\widehat{\mathcal{S}}_{n}^{\theta}  &  =\sum_{i=1}^{N}W_{n}^{\left(  i\right)
}\text{ }\widehat{T}_{n}^{\theta}\left(  X_{n}^{\left(  i\right)  }\right)  .
\label{eq:SMCapproxadditivefunctionals}%
\end{align}%
\noindent\hrulefill

This algorithm is initialized by setting $\widehat{T}_{0}^{\theta}\left(
X_{0}^{\left(  i\right)  }\right)  =0$ for $1\leq i\leq N.$ It has a
computational complexity of $\mathcal{O}\left(  N^{2}\right)  $ which can be
reduced by using fast computational methods \cite{Klass2005}.

The rationale for this algorithm is as follows. By using $\widehat{p}_{\theta
}\left(  \left.  dx_{n-1}\right\vert y_{0:n-1},x_{n}\right)  $ defined in Eq.
(\ref{eq:MCconditionaldensity}) in place of $p_{\theta}\left(  \left.
dx_{n-1}\right\vert y_{0:n-1},x_{n}\right)  $ in Eq.
(\ref{eq:recursionadditivefunctional}), we obtain an approximation
$\widehat{T}_{n}^{\theta}\left(  x_{n}\right)  $ of $T_{n}^{\theta}\left(
x_{n}\right)  $ which is computed at the particle locations $\left\{
X_{n}^{\left(  i\right)  }\right\}  _{1\leq i\leq N}$. The approximation of
$\mathcal{S}_{n}^{\theta}$\ in Eq. (\ref{eq:SMCapproxadditivefunctionals}) now
follows from Eq. (\ref{eq:additivesmoothfunctionalsasfunctionofT}) by using
$\widehat{p}_{\theta}\left(  \left.  dx_{n}\right\vert y_{0:n}\right)  $\ in
place of $p_{\theta}\left(  \left.  dx_{n}\right\vert y_{0:n}\right)  $.

It is valid to use the same notation for the estimates in Eq.
(\ref{eq:batchSMCestimate}) and in Eq. (\ref{eq:SMCapproxadditivefunctionals})
as they are indeed the same. The verification of this assertion may be
accomplished by unfolding the recursion in Eq. (\ref{eq:Tapproximation}).

\section{Theoretical results\label{sec:theory}}

In this section, we present a bound on the non-asymptotic mean square error of
the estimate $\widehat{\mathcal{S}}_{n}^{\theta}$ of $\mathcal{S}_{n}^{\theta
}$. For sake of simplicity, the result is established for additive functionals
of the type
\begin{equation}
S_{n}\left(  x_{0:n}\right)  =\sum_{k=0}^{n}s_{k}\left(  x_{k}\right)
\label{eq:additiveFunctionalSimple}%
\end{equation}
where $s_{k}:\mathcal{X}\rightarrow\mathbb{R}$, and when Algorithm SMC-FS is
implemented using the bootstrap particle filter; see \cite{cappe2005},
\cite{Douc01}\ for a definition of this \textquotedblleft
vanilla\textquotedblright\ particle filter. The result can be generalised to
accommodate an auxiliary implementation of the particle filter
\cite{fearnhead1999}, \cite{douc09}, \cite{Pitt99}. Likewise, the conclusion
is also valid for additive functionals of the type in
(\ref{eq:additiveFunctional}); the proof uses similar arguments but is more complicated.

The following regularity condition will be assumed.

\textbf{(A)} There exist constants $0<\rho,\delta<\infty$ such that for all
$x,x^{\prime}\in\mathcal{X}$, $y\in\mathcal{Y}$ and $\theta\in\Theta$,
\[
\rho^{-1}\leq f_{\theta}\left(  x^{\prime}|x\right)  \leq\rho,\quad\delta
^{-1}\leq g_{\theta}\left(  y|x\right)  \leq\delta.
\]
Admittedly, this assumption is restrictive and\ typically holds when
$\mathcal{X}$\ and $\mathcal{Y}$\ are finite or are compact spaces. In
general, quantifying the errors of SMC approximations under weaker assumptions
is possible \cite{douc09}. (More precise but complicated error bounds for the
particle estimate of $\mathcal{S}_{n}^{\theta}$\ are also presented in
\cite{delmoral2009}\ under weaker assumptions.) However, when (A) holds, the
bounds can be greatly simplified to the extent that they can usually be
expressed as linear or quadratic functions of the time horizon $n$. These
simple rates of growth are meaningful as they have also been observed in
numerical studies even in scenarios where Assumption A\ is not satisfied
\cite{poyadjis2009}.

For a function $s:\mathcal{X}\rightarrow\mathbb{R}$, let $\left\Vert
s\right\Vert =\sup_{x\in\mathcal{X}}|s(x)|$. The oscillation of $s$, denoted
$\mbox{osc}(s)$, is defined to be $\sup{\{|s(x)-s(y)|\;;\;x,y\in\mathcal{X}%
\}}$. The\ main result in this section is the following non-asymptotic bound
for the mean square error of the estimate $\widehat{\mathcal{S}}_{n}^{\theta}$
of $\mathcal{S}_{n}^{\theta}$ given in Eq.
(\ref{eq:SMCapproxadditivefunctionals}).

\begin{theo}
\label{nonasymptheo} Assume (A). Consider the additive functional $S_{n}$\ in
(\ref{eq:additiveFunctionalSimple}) with $\left\Vert s_{k}\right\Vert <\infty$
and $\mbox{osc}(s_{k})\leq1$\ for $0\leq k\leq n$. Then, for any $n\geq0$ and
$\theta\in\Theta$,
\begin{equation}
\mathbb{E}\left(  \left\vert \widehat{\mathcal{S}}_{n}^{\theta}-\mathcal{S}%
_{n}^{\theta}\right\vert ^{2}\right)  \leq a~\frac{(n+1)}{N}\left(
1+\sqrt{\frac{n+1}{N}}\right)  ^{2} \label{LLr}%
\end{equation}
where $a$ is a finite constant that is independent of time $n$, $\theta$ and
the particular choice of functions $\{s_{k}\}_{0\leq k\leq n}$.
\end{theo}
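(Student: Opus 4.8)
The plan is to exploit the equivalence, asserted just after (\ref{eq:SMCapproxadditivefunctionals}), between the forward-only estimate and the batch FFBS estimate (\ref{eq:batchSMCestimate}). Specialising to $S_n(x_{0:n})=\sum_{k=0}^n s_k(x_k)$, this lets me write the exact and estimated quantities as sums of marginal smoothing expectations,
\[
\mathcal{S}_n^\theta=\sum_{k=0}^n \eta_{k|n}(s_k),\qquad
\widehat{\mathcal{S}}_n^\theta=\sum_{k=0}^n \eta_{k|n}^N(s_k),
\]
where $\eta_{k|n}(dx_k)=p_\theta(dx_k|y_{0:n})$ is the exact marginal smoother and $\eta_{k|n}^N$ is its SMC-FFBS approximation (\ref{eq:smoothingapproximationsFFBS})--(\ref{eq:backwardweights}). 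Writing $\eta_k$ for the filter $p_\theta(dx_k|y_{0:k})$ and $B_{\eta_k}(x_{k+1},dx_k)\propto f_\theta(x_{k+1}|x_k)\,\eta_k(dx_k)$ for the backward kernel of (\ref{eq:conditionaldensity}), I have the representation $\eta_{k|n}=\eta_n B_{\eta_{n-1}}\cdots B_{\eta_k}$, with $\eta_{k|n}^N$ obtained by substituting the particle filters $\eta_j^N$. The first observation is that Assumption (A) makes every backward kernel a uniformly mixing Markov kernel whose Dobrushin coefficient is bounded by some $\beta<1$ depending only on $\rho$, so that the forward filter and the backward semigroup both forget their initial conditions geometrically, at a rate uniform in $n$ and $\theta$.

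The crucial point is that the triangle inequality is too lossy here: a time-uniform per-marginal bound $\left\Vert \eta_{k|n}^N(s_k)-\eta_{k|n}(s_k)\right\Vert_{\mathbb{L}^2}=\mathcal{O}(N^{-1/2})$, of the type available in \cite{douc09} and \cite{delmoral2009}, would only yield $\mathcal{O}(n^2/N)$ for the mean square error. To obtain the linear rate I would instead reorganise the global error by the time $p$ at which Monte Carlo noise is injected into the particle system. Viewing $\{\eta_j^N\}_{0\le j\le n}$ as a Feynman--Kac mean-field system, the one-step local error $V_p^N:=\eta_p^N-\Phi_p(\eta_{p-1}^N)$ (with $\Phi_p$ the exact one-step prediction--update map) is, conditionally on the particle history $\mathcal{F}_{p-1}$, a centred empirical fluctuation of size $\mathcal{O}(N^{-1/2})$. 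I would then expand
\[
\widehat{\mathcal{S}}_n^\theta-\mathcal{S}_n^\theta=\sum_{p=0}^n D_{p,n}^N+R_n^N,
\]
where $D_{p,n}^N$ is the first-order effect of $V_p^N$ on $\sum_{k=0}^n\eta_{k|n}(s_k)$ and $R_n^N$ collects the higher-order remainders. By construction the $\{D_{p,n}^N\}_p$ form a martingale-difference sequence in $p$ with respect to $\{\mathcal{F}_p\}$, so their contributions to the second moment are orthogonal and simply add.

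The heart of the argument is a sensitivity estimate: I would show that the influence of the single local error $V_p^N$ on a marginal $\eta_{k|n}(s_k)$ decays geometrically with the separation $|k-p|$. For $k\le p$ this follows from the contraction of the product of backward kernels $B_{\eta_{p-1}}\cdots B_{\eta_k}$ that carries the perturbation down to index $k$; for $k>p$ it follows from the exponential stability of the forward filter, which damps the propagated descendant of $V_p^N$ in $\eta_j^N$ for $j\ge k$. Both use only the uniform mixing guaranteed by (A) together with the Lipschitz dependence of $B_\eta$ on $\eta$. Since $\mbox{osc}(s_k)\le1$, this gives $\sum_{k=0}^n|\text{influence of }V_p^N\text{ on }\eta_{k|n}(s_k)|\le c/(1-\beta)$, a bound uniform in $p$ and $n$; hence $\mathbb{E}\!\left[(D_{p,n}^N)^2\mid\mathcal{F}_{p-1}\right]\le c'/N$, and summing the $n+1$ orthogonal increments yields $\sum_{p=0}^n\mathbb{E}[(D_{p,n}^N)^2]\le c'(n+1)/N$.

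Finally I would combine the pieces: the first-order part contributes the leading $\mathcal{O}((n+1)/N)$, while the remainder $R_n^N$, which arises from the nonlinearity of the self-normalised (ratio) particle estimates, is controlled by $\mathbb{L}^p$ bounds on the local errors combined with the same geometric-sensitivity argument; it produces exactly the correction factor, giving $\left\Vert\widehat{\mathcal{S}}_n^\theta-\mathcal{S}_n^\theta\right\Vert_{\mathbb{L}^2}\le b\sqrt{(n+1)/N}\,(1+\sqrt{(n+1)/N})$, and squaring delivers (\ref{LLr}). The main obstacle is precisely the sensitivity estimate: because the backward kernels $B_{\eta_j^N}$ are random and coupled to the very noise whose effect is being tracked, establishing the $\beta^{|k-p|}$ decay uniformly in $n$ requires carefully combining the uniform forgetting of the backward semigroup, the stability of the backward kernel as a functional of the filter, and propagation-of-chaos control of the filter errors --- all of which are made uniform in $n$ and $\theta$ by Assumption (A).
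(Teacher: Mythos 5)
Your proposal takes essentially the same route as the paper's own proof: the paper likewise decomposes the error by the time of noise injection (a telescoping sum over $k$ whose terms reduce to the local sampling errors $V_k^N$), exploits the zero-mean, uncorrelated (martingale-difference) structure so that the second moments of the first-order terms simply add, establishes the uniform-in-$n$ sensitivity estimate by splitting $\mbox{osc}(P_{k,n}^N(S_n))$ into a backward-kernel contraction sum and a forward-semigroup contraction sum under Assumption (A), and controls the self-normalisation remainder via Cauchy--Schwartz and $\mathbb{L}^4$/Khintchine-type bounds, which is exactly what produces the $\left(1+\sqrt{(n+1)/N}\right)$ correction factor. The argument is sound and matches the paper's semigroup technique step for step.
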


The proof is given in the Appendix. It follows that the asymptotic variance
of\linebreak$\sqrt{N}\left(  \widehat{\mathcal{S}}_{n}^{\theta}-\mathcal{S}%
_{n}^{\theta}\right)  $, i.e. as the number of particles $N$ goes to infinity,
is upper bounded by a quantity proportional to $(n+1)$ as the bias of the
estimate is $\mathcal{O}(1/N)$ \cite[Corollary 5.3]{delmoral2009}.

Let $\widehat{\mathcal{R}}_{n}^{\theta}$\ denote the SMC estimate of
$\mathcal{S}_{n}^{\theta}$ obtained with the standard path space method. This
estimate can have a much larger asymptotic variance as is illustrated with the
following very simple model. Let $f_{\theta}\left(  x^{\prime}|x\right)
=\mu_{\theta}\left(  x^{\prime}\right)  $, i.e. $\{X_{n}\}_{n\geq0}$ is an
i.i.d. sequence, and let $y_{k}=y$ and $s_{k}=s$ for all $k\geq1$ where $s$ is
some real valued function on $\mathcal{X}$, and $s_{0}=0$. It can be easily
established that the formula for the asymptotic variance of $\sqrt{N}\left(
\widehat{\mathcal{R}}_{n}^{\theta}-\mathcal{S}_{n}^{\theta}\right)  $ given in
\cite{chopin2004}, \cite[eqn. (9.13), page 304 ]{delmoral2004} simplifies to
\begin{equation}
n\int\frac{\left[  \pi_{\theta}\left(  \left.  x\right\vert y\right)
\widetilde{s}_{\theta}\left(  x\right)  \right]  ^{2}}{\mu_{\theta}\left(
x\right)  }dx+\frac{n\left(  n-1\right)  }{2}\int\frac{\pi_{\theta}\left(
\left.  x\right\vert y\right)  ^{2}}{\mu_{\theta}\left(  x\right)  }dx\text{
}\int\widetilde{s}_{\theta}\left(  x\right)  ^{2}\pi_{\theta}\left(  \left.
x\right\vert y\right)  dx \label{eq:asymptoticvariancepathbased}%
\end{equation}
where
\begin{align*}
\pi_{\theta}\left(  \left.  x\right\vert y\right)   &  =\frac{\mu_{\theta
}\left(  x\right)  g_{\theta}\left(  \left.  y\right\vert x\right)  }{\int
\mu_{\theta}\left(  x^{\prime}\right)  g_{\theta}\left(  \left.  y\right\vert
x^{\prime}\right)  dx^{\prime}},\\
\widetilde{s}_{\theta}\left(  x\right)   &  =s\left(  x\right)  -\int s\left(
x\right)  \pi_{\theta}\left(  \left.  x\right\vert y\right)  dx.
\end{align*}
Thus the asymptotic variance increases quadratically with time $n$. Note
though that the asymptotic variance of $\sqrt{N}\left(  n^{-1}\widehat
{\mathcal{R}}_{n}^{\theta}-n^{-1}\mathcal{S}_{n}^{\theta}\right)  $ converges
as $n$ tends to infinity to a positive constant. Thus path space method can
provide stable estimates of $\mathbb{E}_{\theta}\left[  \left.  n^{-1}%
S_{n}\left(  X_{0:n}\right)  \right\vert y_{0:n}\right]  $, i.e. when the
additive functionals are time-averaged. Let
\[
S_{\gamma,n}\left(  x_{0:n}\right)  =\gamma_{n}s(x_{n})+\sum_{k=1}%
^{n-1}s\left(  x_{k}\right)  \gamma_{k}\prod\limits_{i=k+1}^{n}(1-\gamma_{i})
\]
where $\left\{  \gamma_{n}\right\}  _{n\geq1}$ is a positive non-increasing
sequence that satisfies the following constraints: $\sum_{n}\gamma_{n}=\infty$
and $\sum_{n}\gamma_{n}^{2}<\infty$. When $\gamma_{n}=n^{-1}$ then
$S_{\gamma,n}\left(  x_{0:n}\right)  =n^{-1}S_{n}\left(  x_{0:n}\right)  $.
One important choice for recursive parameter estimation (see Section
\ref{sec:experiments}) is
\begin{equation}
\gamma_{n}=n^{-\alpha},\quad0.5<\alpha\leq1. \label{eq:stepChoice}%
\end{equation}
It is also of interest to quantify the stability of the path space method when
applied to estimate $\mathcal{S}_{n}^{\theta}=\mathbb{E}_{\theta}\left[
\left.  S_{\gamma,n}\left(  X_{0:n}\right)  \right\vert y_{0:n}\right]  $\ in
this more general time-averaging setting. Once again let $\widehat
{\mathcal{R}}_{n}^{\theta}$\ denote the SMC estimate of $\mathbb{E}_{\theta
}\left[  \left.  S_{\gamma,n}\left(  X_{0:n}\right)  \right\vert
y_{0:n}\right]  $ obtained with the standard path space method. Using the
formula for the asymptotic variance of $\sqrt{N}\left(  \widehat{\mathcal{R}%
}_{n}^{\theta}-\mathcal{S}_{n}^{\theta}\right)  $ given in \cite{chopin2004},
\cite[eqn. (9.13), page 304 ]{delmoral2004} it can be verified that this
asymptotic variance is
\begin{align*}
&  \int\frac{\left[  \pi_{\theta}\left(  \left.  x\right\vert y\right)
\widetilde{s}_{\theta}\left(  x\right)  \right]  ^{2}}{\mu_{\theta}\left(
x\right)  }dx\sum\limits_{k=1}^{n}\gamma_{k}^{2}\prod\limits_{i=k+1}%
^{n}(1-\gamma_{i})^{2}\\
&  +\int\frac{\pi_{\theta}\left(  \left.  x\right\vert y\right)  ^{2}}%
{\mu_{\theta}\left(  x\right)  }dx\text{ }\int\widetilde{s}_{\theta}\left(
x\right)  ^{2}\pi_{\theta}\left(  \left.  x\right\vert y\right)
dx\sum\limits_{k=2}^{n}\sum\limits_{i=1}^{k-1}\gamma_{i}^{2}(1-\gamma
_{i+1})^{2}\cdots(1-\gamma_{n})^{2}%
\end{align*}
It follows from Lemma \ref{lem:stepDiscounting} in Appendix that any
accumulation point of this sequence (in $n$) has to be positive. In contrast,
the asymptotic variance of $\sqrt{N}(\widehat{\mathcal{S}}_{n}^{\theta
}-\mathcal{S}_{n}^{\theta})$, i.e. when $\mathbb{E}_{\theta}\left[  \left.
S_{\gamma,n}\left(  X_{0:n}\right)  \right\vert y_{0:n}\right]  $ is computed
using Algorithm SMC-FS, will converge to zero as $n$ tends to infinity.

\section{Application to SMC\ parameter estimation\label{sec:experiments}}

An important application of the forward smoothing recursion is to parameter
estimation for non-linear non-Gaussian SSMs. We will assume that observations
are generated from an unknown `true' model with parameter value $\theta^{\ast
}\in\Theta$, i.e. $X_{n}|(X_{n-1}=x_{n-1})\sim f_{\theta^{\ast}}(\cdot
|x_{n-1})$ and $Y_{n}|(X_{n}=x_{n})\sim g_{\theta^{\ast}}(\cdot|x_{n})$. The
static parameter estimation problem has generated a lot of interest over the
past decade and many SMC\ techniques have been proposed to solve it; see
\cite{kantas09} for a recent review.

\subsection{Brief literature review}

In a Bayesian approach to the problem, a prior distribution is assigned to
$\theta$ and\ the sequence of posterior densities $\{p\left(  \theta
,x_{0:n}|y_{0:n}\right)  \}_{n\geq0}$ is estimated recursively using
SMC\ algorithms combined with Markov chain Monte Carlo (MCMC) steps
\cite{Andrieu99}, \cite{Fea02}, \cite{Sto02}. Unfortunately these methods
suffer from the particle path degeneracy problem and will result in unreliable
estimates of the model parameters; see \cite{ADT05}, \cite{olsson2008} for a
discussion of this issue. Given a fixed observation record $y_{0:n}$, an
alternative offline MCMC\ approach to estimate $p\left(  \theta,x_{0:n}%
|y_{0:n}\right)  $ has been recently proposed which relies on proposals built
using the SMC\ approximation of $p_{\theta}\left(  \left.  x_{0:n}\right\vert
y_{0:n}\right)  $ \cite{andrieu2009}.

In a ML approach, the estimate of $\theta^{\ast}$ is the maximising argument
of the likelihood of the observed data. The ML estimate can be calculated
using a gradient ascent algorithm either offline for a fixed batch of data or
online \cite{legland1997}; see Section \ref{subsec:rml}. Likewise, the EM
algorithm can also be implemented offline or online. The online EM algorithm,
assuming all calculations can be performed exactly, is presented in
\cite{ford98}, \cite{elliott2000}, \cite{elliott2002}, \cite{mongillo2008} and
\cite{Cap09B}. For a general SSM for which the quantities required by the
online EM cannot be calculated exactly, an SMC implementation is possible
\cite{Cap09}, \cite[Section 3.2.]{kantas09}; see Section
\ref{sec:parameterestimation}.

\subsection{Gradient ascent algorithms\label{subsec:rml}}

To maximise the likelihood $p_{\theta}(y_{0:n})$ w.r.t. $\theta$, we can use a
simple gradient algorithm. Let $\{\theta_{i}\}_{i\in\mathbb{N}}$\ be the
sequence of parameter estimates of the gradient algorithm. We update the
parameter at iteration $i+1$ using%
\[
\theta_{i+1}=\theta_{i}+\gamma_{i+1}\left.  \nabla\log p_{\theta}\left(
y_{0:n}\right)  \right\vert _{\theta=\theta_{i}}%
\]
where $\left.  \nabla\log p_{\theta}\left(  y_{0:n}\right)  \right\vert
_{\theta=\theta_{i}}$ is the score vector computed at $\theta=\theta_{i}$ and
$\left\{  \gamma_{i}\right\}  _{i\geq1}$ is a sequence of positive
non-increasing step-sizes defined in (\ref{eq:stepChoice}). For a general SSM,
we need to approximate $\left.  \nabla\log p_{\theta}\left(  y_{0:n}\right)
\right\vert _{\theta=\theta_{i}}$. As mentioned in the introduction, the score
vector admits several smoothed additive functional representations; see Eq.
(\ref{eq:score}) and \cite{coquelin2008}. Using Eq. (\ref{eq:score}), it is
possible to approximate the score with Algorithm SMC-FS.

In the online implementation, the parameter estimate at time $n+1$ is updated
according to \cite{BMP90}, \cite{legland1997}
\begin{equation}
\theta_{n+1}=\theta_{n}+\gamma_{n+1}\nabla\log p_{\theta_{0:n}}(y_{n}%
|y_{0:n-1}) \label{eq:RML}%
\end{equation}
Upon receiving $y_{n}$, $\theta_{n}$ is updated in the direction of ascent of
the predictive density of this new observation. A necessary requirement for an
online implementation is that the previous values of the model parameter
estimates (other than $\theta_{n}$) are also used in the evaluation of
$\nabla_{\theta}\log p_{\theta}(y_{n}|y_{0:n-1})$ at $\theta=\theta_{n}$. This
is indicated in the notation $\nabla\log p_{\theta_{0:n}}(y_{n}|y_{0:n-1})$.
(Not doing so would require browsing through the entire history of
observations.) This approach was suggested by \cite{legland1997} for the
finite state-space case and is named RML. The asymptotic properties of this
algorithm (i.e. the behavior of $\theta_{n}$\ in the limit as $n$\ goes to
infinity) have been studied in the case of an i.i.d. hidden process by
\cite{titterington1984} and for an HMM with a finite state-space by
\cite{legland1997}. Under suitable regularity assumptions, convergence to
$\theta^{\ast}$ and a central limit theorem for the estimation error has been established.

For a general SSM, we can compute a SMC\ estimate of $\nabla\log
p_{\theta_{0:n}}(y_{n}|y_{0:n-1})$ using Algorithm SMC-FS upon noting that
$\nabla\log p_{\theta_{0:n}}(y_{n}|y_{0:n-1})$ is equal to
\[
\nabla\log p_{\theta_{0:n}}(y_{0:n})-\nabla\log p_{\theta_{0:n-1}}%
(y_{0:n-1}).
\]
In particular, at time $n$, a particle approximation $\left\{  W_{n}^{\left(
i\right)  },X_{n}^{\left(  i\right)  }\right\}  _{1\leq i\leq N}$ of
$p_{\theta_{0:n}}\left(  \left.  dx_{n}\right\vert y_{0:n}\right)  $ is
computed using the particle approximation at time $n-1$ and parameter value
$\theta=\theta_{n}$. Similarly, the computation of Eq.
(\ref{eq:Tapproximation}) is performed using $\theta=\theta_{n}$\ and with
\[
s_{n}(x_{n-1},x_{n})=\left.  \nabla\log f_{\theta}\left(  \left.
x_{n}\right\vert x_{n-1}\right)  \right\vert _{\theta=\theta_{n}}+\left.
\nabla\log g_{\theta}\left(  \left.  y_{n}\right\vert x_{n}\right)
\right\vert _{\theta=\theta_{n}}.
\]
The estimate of $\nabla\log p_{\theta_{0:n}}(y_{n}|y_{0:n-1})$\ is now the
difference of the estimate in Eq. (\ref{eq:SMCapproxadditivefunctionals}) with
the same estimate computed at time $n-1$.

Under the regularity assumptions given in Section \ref{sec:theory}, it follows
from the results in the Appendix that the asymptotic variance (i.e. as
$N\rightarrow\infty$) of the SMC\ estimate of $\nabla\log p_{\theta_{0:n}%
}(y_{n}|y_{0:n-1})$ computed using Algorithm SMC-FS is uniformly (in time)
bounded. On the contrary, the standard path-based SMC\ estimate of $\nabla\log
p_{\theta_{0:n}}(y_{n}|y_{0:n-1})$ has an asymptotic variance that increases
linearly with $n$.

\subsection{EM algorithms\label{sec:parameterestimation}}

Gradient ascent algorithms are more generally applicable than the EM
algorithm. However, their main drawback in practice is that it is difficult to
properly scale the components of the computed gradient vector. For this reason
the EM\ algorithm is usually favoured by practitioners whenever it is applicable.

Let $\{\theta_{i}\}_{i\in\mathbb{N}}$\ be the sequence of parameter estimates
of the EM algorithm. In the offline approach, at iteration $i+1$, the function%
\[
Q(\theta_{i},\theta)=\int\log p_{\theta}(x_{0:n},y_{0:n})\text{ }p_{\theta
_{i}}(x_{0:n}|y_{0:n})dx_{0:n}%
\]
is computed and then maximized. The maximizing argument is the new estimate
$\theta_{i+1}$. If\ $p_{\theta}(x_{0:n},y_{0:n})$ belongs to the exponential
family, then the maximization step is usually straightforward. We now give an
example of this.

Let $s^{l}:\mathcal{X\times X\times Y}\rightarrow\mathbb{R}$, $l=1,\ldots,m$,
be a collection of functions with corresponding additive functionals%
\[
S_{l,n}(x_{0:n},y_{0:n})=\sum_{k=1}^{n}s^{l}\left(  x_{k-1},x_{k}%
,y_{k}\right)  ,\qquad1\leq l\leq m,
\]
and let
\[
\mathcal{S}_{l,n}^{\theta}=\int S_{l,n}(x_{0:n},y_{0:n})p_{\theta}%
(x_{0:n}|y_{0:n})dx_{0:n}.
\]
The collection $\{\mathcal{S}_{l,n}^{\theta}\}_{1\leq l\leq m}$ is also
referred to as the \emph{summary statistics} in the literature.\ Typically,
the maximising argument of $Q(\theta_{i},\theta)$\ can be characterised
explicitly through a suitable function $\Lambda:\mathbb{R}^{m}\rightarrow
\Theta$, i.e.
\begin{equation}
\theta_{i+1}=\Lambda\left(  n^{-1}\mathcal{S}_{n}^{\theta_{i}}\right)
\end{equation}
where $[\mathcal{S}_{n}^{\theta}]_{l}=\mathcal{S}_{l,n}^{\theta}$. As an
example of this, consider the following stochastic volatility model
\cite{Pitt99}.

\begin{exam}
\label{ex:stochVol}The stochastic volatility model is a SSM defined by the
following equations:
\begin{align*}
X_{0}  &  \sim\mathcal{N}\left(  0,\frac{\sigma^{2}}{1-\phi^{2}}\right)
,\text{ }X_{n+1}=\phi X_{n}+\sigma V_{n+1},\\
Y_{n}  &  =\beta\exp\left(  X_{n}/2\right)  W_{n},
\end{align*}
where $\{V_{n}\}_{n\in\mathbb{N}}$ and $\{W_{n}\}_{n\geq0}$ are independent
and identically distributed standard normal noise sequences, which are also
independent of each other and of the initial state $X_{0}$. The model
parameters $\theta\triangleq\left(  \phi,\sigma^{2},\beta^{2}\right)
\in\mathbb{R\times(}0,\infty\mathbb{)}\times\mathbb{(}0,\infty\mathbb{)}$ are
to be estimated. To apply the EM algorithm to this model, let%
\begin{align*}
s^{1}\left(  x_{n-1},x_{n},y_{n}\right)   &  =x_{n-1}x_{n},\text{ }%
s^{2}\left(  x_{n-1},x_{n},y_{n}\right)  =\left(  x_{n-1}\right)  ^{2},\\
s^{3}\left(  x_{n-1},x_{n},y_{n}\right)   &  =\left(  x_{n}\right)
^{2},\text{ }s^{4}\left(  x_{n-1},x_{n},y_{n}\right)  =y_{n}^{2}\exp\left(
-x_{n}\right)  .
\end{align*}
For large $n$, we can safely ignore the terms associated to the initial
density $\mu_{\theta}\left(  x\right)  $ and the solution to the maximisation
step is characterised by the function
\[
\Lambda(z_{1},z_{2},z_{3},z_{4})=\left(  \frac{z_{1}}{z_{2}},z_{3}+\left(
\frac{z_{1}}{z_{2}}\right)  ^{2}z_{2}-2\left(  \frac{z_{1}}{z_{2}}\right)
z_{1},z_{4}\right)  .
\]

\end{exam}

The SMC\ implementation of the forward smoothing recursion has advantages even
for the batch EM algorithm. As there is no backward pass, there is no need to
store the particle approximations of $\left\{  p_{\theta}(dx_{k}%
|y_{0:k})\right\}  _{k=0,...,n}$, which can result in a significant memory
saving for large data sets.

In the online implementation, running averages of the sufficient statistics
are computed instead \cite{Cap09}, \cite{elliott2000}, \cite{elliott2002},
\cite{ford98}, \cite[Section 3.2.]{kantas09}, \cite{mongillo2008}. Let
$\{\theta_{k}\}_{0\leq k\leq n}$ be the sequence of parameter estimates of the
online EM\ algorithm computed sequentially based on $y_{0:n}$. When $y_{n+1}$
is received, for each $l=1,\ldots,m$, compute
\begin{equation}
\hspace{-0.5cm}%
\begin{tabular}
[c]{l}%
$\mathcal{S}_{l,n+1}=\gamma_{n+1}\text{ }\int s^{l}\left(  x_{n}%
,x_{n+1},y_{n+1}\right)  p_{\theta_{0:n}}(x_{n},x_{n+1}|y_{0:n+1})dx_{n:n+1}%
$\\
$\ \ \ \ \ \ \ \ \ \ \ +\left(  1-\gamma_{n+1}\right)  \int\sum_{k=1}%
^{n}\left(  \prod\limits_{i=k+1}^{n}\left(  1-\gamma_{i}\right)  \right)
\gamma_{k}s^{l}\left(  x_{k-1},x_{k},y_{k}\right)  p_{\theta_{0:n}}%
(x_{0:n}|y_{0:n+1})dx_{0:n},$%
\end{tabular}
\ \ \ \ \ \ \label{eq:suffStatOnline}%
\end{equation}
and then set%
\[
\theta_{n+1}=\Lambda\left(  \mathcal{S}_{n+1}\right)
\]
where $[\mathcal{S}_{n+1}]_{l}=\mathcal{S}_{l,n+1}$. Here $\{\gamma
_{n}\}_{n\geq1}$ is a step-size sequence satisfying the same conditions
stipulated for the RML in Section \ref{subsec:rml}. (The recursive
implementation of $\mathcal{S}_{l,n+1}$\ is standard \cite{BMP90}.) The
subscript $\theta_{0:n}\ $on $p_{\theta_{0:n-1}}(x_{0:n}|y_{0:n})$ indicates
that the posterior density is being computed sequentially using the parameter
$\theta_{k-1}$ at time $k$ (and $\theta_{0}$ at time $0$.) References
\cite{Cap09B}, \cite{elliott2000}, \cite[chapter 4]{ford98} and
\cite{mongillo2008} have proposed an online EM algorithm, implemented as
above, for finite state HMMs. In the finite state setting all computations
involved can be done exactly in contrast to general SSMs where numerical
procedures are called for. It is also possible to do all the calculations
exactly for linear Gaussian models \cite{elliott2002}.

Define the vector valued function $s:\mathcal{X\times X\times Y}%
\rightarrow\mathbb{R}^{m}$ as follows: $s=[s^{1},\ldots,s^{m}]^{\text{T}}$.
Computing $\mathcal{S}_{n}$ sequentially using SMC-FS is straightforward and
detailed in the following\ algorithm.%

\noindent\hrulefill
\vspace{-0.25cm}

\begin{center}
\textbf{SMC-FS implementation of online EM}
\end{center}

\textsf{\hspace{-0.5cm}}\underline{\textsf{At time }$n=0$}

$\hspace{-0.5cm}\bullet$\textsf{\ Choose }$\theta_{0}$\textsf{.}

$\hspace{-0.5cm}\bullet$\textsf{\ Set }$T_{0}^{\left(  i\right)  }%
=0\in\mathbb{R}^{m}$\textsf{, }$i=1,\ldots,N$\textsf{.}

$\hspace{-0.5cm}\bullet$\textsf{\ Construct the SMC approximation }%
$\{X_{0}^{\left(  i\right)  },W_{0}^{\left(  i\right)  }\}_{1\leq i\leq N}%
$\textsf{\ of }$p_{\theta_{0}}(dx_{0}|y_{0})$\textsf{.}

\textsf{\hspace{-0.5cm}}\underline{\textsf{At times }$n\geq1$}

$\hspace{-0.5cm}\bullet$\textsf{\ Construct the SMC approximation }%
$\{X_{n}^{\left(  i\right)  },W_{n}^{\left(  i\right)  }\}_{1\leq i\leq N}%
$\textsf{\ of }$p_{\theta_{0:n-1}}(dx_{n}|y_{0:n})$.

$\hspace{-0.5cm}\bullet$\textsf{\ For each }$i=1,\ldots,N$, \textsf{compute }%
\[
T_{n}^{(i)}=\frac{\sum_{j=1}^{N}W_{n-1}^{\left(  j\right)  }f_{\theta_{n-1}%
}\left(  X_{n}^{(i)}|X_{n-1}^{(j)}\right)  \left[  \left(  1-\gamma
_{n}\right)  \text{ }T_{n-1}^{(j)}+\gamma_{n}\text{ }s\left(  X_{n-1}^{\left(
j\right)  },X_{n}^{\left(  i\right)  },y_{n}\right)  \right]  }{\sum_{j=1}%
^{N}W_{n-1}^{\left(  j\right)  }f_{\theta_{n-1}}\left(  X_{n}^{(i)}%
|X_{n-1}^{(j)}\right)  }.
\]

$\hspace{-0.5cm}\bullet$\textsf{\ Compute }$\widehat{\mathcal{S}}_{n}%
=\sum_{i=1}^{N}W_{n}^{\left(  i\right)  }T_{n}^{(i)}$ \textsf{and update the
parameter, }$\theta_{n}=\Lambda\left(  \widehat{\mathcal{S}}_{n}\right)  .$

\vspace{-0.25cm}%
\noindent\hrulefill

It was suggested in \cite[Section 3.2.]{kantas09} that the two other SMC
methods discussed in Section \ref{sec:litReview} could be used to approximate
$\mathcal{S}_{n}$; the path space approach\ to implement the online EM was
also independently proposed in \cite{Cap09}. Doing so would yield a cheaper
alternative to Algorithm SMC-EM above with computational cost $\mathcal{O}%
\left(  N\right)  $, but not without its drawbacks. The fixed-lag
approximation of \cite{kitagawa2001} would introduce a bias which might be
difficult to control and the path space approach suffers from the usual
particle path degeneracy problem. Consider the step-size sequence in
(\ref{eq:stepChoice}). If the path space method is used to
estimate\ $\mathcal{S}_{n}$ then the theory in Section \ref{sec:theory} tells
us that, even under strong mixing assumptions, the asymptotic variance of the
estimate of $\mathcal{S}_{n}$\ will not converge to zero for $0.5<\alpha\leq
1$. Thus it will not yield a theoretically convergent algorithm. Numerical
experiments in \cite{Cap09} appear to provide stable results which we
attribute to the fact that this variance might be very small in the scenarios
considered\footnote{In a\ Bayesian framework where $\theta$ is assigned a
prior distribution and we estimate $\left\{  p\left(  \left.  x_{n}%
,\theta\right\vert y_{0:n}\right)  \right\}  _{n\geq0}$ \cite{Andrieu99},
\cite{Fea02}, \cite{Sto02}, the path degeneracy problem has much more severe
consequences than in the ML framework considered here as illustrated in
\cite{ADT05}. Indeed in the ML framework, the filter $\left\{  p_{\theta
}\left(  \left.  x_{n}\right\vert y_{0:n}\right)  \right\}  _{n\geq0}$ will
have, under regularity assumptions, exponential forgetting properties for any
$\theta\in\Theta$ whereas this will never be the case for $p\left(  \left.
x_{n},\theta\right\vert y_{0:n}\right)  .$}. In contrast, the asymptotic
variance of the $\mathcal{O}\left(  N^{2}\right)  $ estimate\ converges to
zero in time $n$ for the entire range $0.5<\alpha\leq1$ under the same mixing
conditions. The original $\mathcal{O}\left(  N^{2}\right)  $ implementation
proposed here has been recently successfully adopted in \cite{lecorff2010} to
solve a complex parameter estimation problem arising in robotics.

\section{Simulations\label{sec:simulations}}

\subsection{Comparing SMC-FS with the path space method}

We commence with a study of a scalar linear Gaussian SSM for which we may
calculate smoothed functionals analytically. We use these exact values as
benchmarks for the SMC\ approximations. The model is
\begin{align}
X_{0}  &  \sim\mathcal{N}\left(  0,\sigma_{0}^{2}\right)  ,\text{ }%
X_{n+1}=\phi X_{n}+\sigma_{V}V_{n+1},\text{ }\label{eq:evollineargauss}\\
Y_{n}  &  =cX_{n}+\sigma_{W}W_{n}, \label{eq:obslineargauss}%
\end{align}
where $V_{n}\overset{\text{i.i.d.}}{\sim}\mathcal{N}\left(  0,1\right)  $ and
$W_{n}\overset{\text{i.i.d.}}{\sim}\mathcal{N}\left(  0,1\right)  $. We
compared the exact values of the following smoothed functionals%
\begin{equation}
\mathcal{S}_{1,n}^{\theta}=\mathbb{E}_{\theta}\left[  \left.  \sum_{k=1}%
^{n}X_{k-1}^{2}\right\vert y_{0:n}\right]  ,\quad\mathcal{S}_{2,n}^{\theta
}=\mathbb{E}_{\theta}\left[  \left.  \sum_{k=1}^{n}X_{k-1}\right\vert
y_{0:n}\right]  ,\quad\mathcal{S}_{3,n}^{\theta}=\mathbb{E}_{\theta}\left[
\left.  \sum_{k=1}^{n}X_{k-1}X_{k}\right\vert y_{0:n}\right]  ,
\label{eq:benchmarkFunctionals}%
\end{equation}
computed at $\theta^{\ast}=\left(  \phi^{\ast},\sigma_{V}^{\ast},c^{\ast
},\sigma_{W}^{\ast}\right)  =\left(  0.8,0.1,1.0,1.0\right)  $\ with the
bootstrap filter implementation of Algorithm SMC-FS and the path space method.
Comparisons were made after 2500, 5000, 7500 and 10,000 observations to
monitor the increase in variance and the experiment was replicated 50 times to
generate the box-plots in Figure \ref{combinedNandN2boxplots}. (All
replications used the same data record.) Both estimates were computed using
$N=500$ particles.

\begin{figure}[h]
\centering\includegraphics[width=1.0\textwidth]{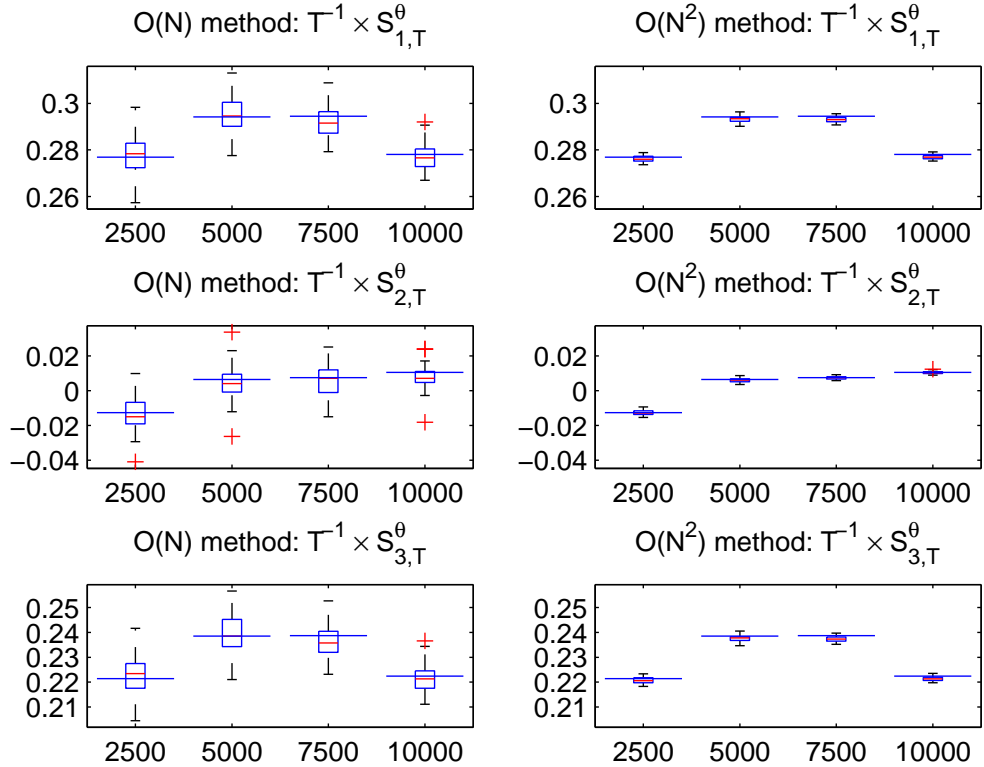}\caption{Box
plots of SMC estimates of the smoothed additive functions in
(\ref{eq:benchmarkFunctionals}) for a linear Gaussian SSM. Estimates were
computed with path space method (left column) and Algorithm SMC-FS (right
column). The long horizontal line intersecting the box indicates the true
value.}%
\label{combinedNandN2boxplots}%
\end{figure}

From Figure \ref{combinedNandN2boxplots}\ it is evident that the
SMC\ estimates of Algorithm SMC-FS significantly\ outperforms the
corresponding SMC\ estimates of the path space method. However one should bear
in mind that the former algorithm has $\mathcal{O}(N^{2})$ computational
complexity while the latter is $\mathcal{O}(N)$. Thus a comparison that takes
this difference into consideration is important. From Theorem
\ref{nonasymptheo}\ and the discussion after it, we expect the variance of
Algorithm SMC-FS's estimate to grow only linearly with the time index compared
to a quadratic in time growth of variance for the path space method. Hence,
for the same computational effort we argue that, for large observation
records, the estimate of Algorithm SMC-FS is always going to outperform the
path space estimates. Specifically, for a large enough $n$, the variance of
Algorithm SMC-FS's estimate with $N$ particles will be significantly less than
the variance of the path space estimate with $N^{2}$ particles. If the number
of observations is small then, taking into account the computational
complexity, it might be better to use the path space estimate as the variance
benefit of using Algorithm SMC-FS may not be appreciable to justify the
increased computational load.

\subsection{Online EM}

Figure \ref{fig:constantStep} shows the parameter estimates obtained using the
SMC implementation of online EM for the stochastic volatility model discussed
in Example \ref{ex:stochVol}. The true value of the parameters were
$\theta^{\ast}=\left(  \phi,\sigma^{2},\beta^{2}\right)  =\left(
0.8,0.1,1\right)  $\ and 500 particles were used. SMC-EM\ was started at the
initial guess $\theta_{0}=(0.1,1,2)$. For the first 100 observations, only the
E-step was executed. That is the step $\theta_{n}=\Lambda\left(
\widehat{\mathcal{S}}_{n}\right)  $,\ which is the M-step was skipped.
SMC-EM\ was run in its entirety for observations 101 and onwards. The step
size used was $\gamma_{n}=0.01$ for $n\leq10^{5}$ and $1/(n-\,\,5\times
10^{4})^{0.6}$ for $n>10^{5}$. Figure \ref{fig:constantStep} shows the
sequence of parameter estimates computed with a very long\ observation sequence.

\begin{figure}[h]
\centering\includegraphics[width=0.6\textwidth]{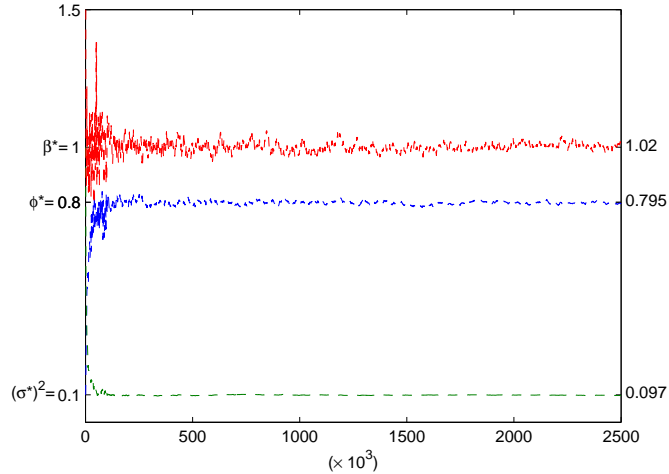}\caption{
Estimating the parameters of the stochastic volatility model with the SMC
version of online EM, Algorithm SMC-EM. Initial parameter guess $\theta
_{0}=(0.1,1,2)$. True and converged values (average of the last 1000
iterations) are indicated on the left and the right of the plot respectively.}%
\label{fig:constantStep}%
\end{figure}


\section{Discussion\label{sec:discussion}}

We proposed a new SMC algorithm to compute the expectation of additive
functionals recursively in time. Essentially, it is an online\ implementation
of the FFBS SMC algorithm proposed in \cite{Douc00}. This algorithm has an
$\mathcal{O}(N^{2})$ computational complexity where $N$ is the number of
particles. It was mentioned how a standard path space SMC\ estimator to
compute the same expectations recursively in time could be developed. This
would have an $\mathcal{O}(N)$ computational complexity. However, as
conjectured in \cite{poyadjis2009}, it was shown here that the asymptotic
variance of the SMC-FFBS estimator\ increased linearly with time whereas that
of the $\mathcal{O}(N)$\ method increased quadratically. The online SMC-FFBS
estimator was then used to\ perform recursive parameter estimation. While the
convergence of RML and online EM have been established when they can be
implemented exactly, the convergence of the SMC\ implementation of these
algorithms have yet to be established and is currently under investigation.

\section{Acknowledgments}

The authors would like to thank Olivier Capp\'{e}, Thomas Flury, Sinan
Yildirim and \'{E}ric Moulines for comments and references that helped improve
the first version of this paper. The authors are also grateful to Rong Chen
for pointing out the link between the forward smoothing recursion and dynamic
programming. Finally, we are thankful to Robert Elliott to have pointed out to
us references \cite{elliott2000}, \cite{elliott2002} and \cite{ford98}.

\appendix

\section{Appendix}

The proofs in this section hold for any fixed $\theta$ and therefore $\theta
$\ is omitted from the notation. This section commences with some essential definitions.

Consider the measurable space $(E,\mathcal{E})$.\ Let $\mathcal{M}(E)$ denote
the set of all finite signed measures\ and $\mathcal{P}(E)$ the set of all
probability measures on $E$. Let $\mathcal{B}(E)$\ denote the Banach space of
all bounded and measurable functions $f$ equipped with the uniform norm $\Vert
f\Vert$. Let $\nu(f)=\int~\nu(dx)~f(x)$, i.e. $\nu(f)$ is the Lebesgue
integral of the function $f\in\mathcal{B}(E)$ w.r.t. the measure $\nu
\in\mathcal{M}(E)$. If $\nu$\ is a density w.r.t. some dominating measure $dx$
on $E$ then, $\nu(f)=\int\nu(x)~f(x)dx$. We recall that a bounded integral
kernel $M(x,dx^{\prime})$ from a measurable space $(E,\mathcal{E})$ into an
auxiliary measurable space $(E^{\prime},\mathcal{E}^{\prime})$ is an operator
$f\mapsto M(f)$ from $\mathcal{B}(E^{\prime})$ into $\mathcal{B}(E)$ such that
the functions
\[
x\mapsto M(f)(x):=\int_{E^{\prime}}M(x,dx^{\prime})f(x^{\prime})
\]
are $\mathcal{E}$-measurable and bounded, for any $f\in\mathcal{B}(E^{\prime
})$. In the above displayed formulae, $dx^{\prime}$ stands for an
infinitesimal neighborhood of a point $x^{\prime}$ in $E^{\prime}$. Let
$\beta(M)$ denote the Dobrushin coefficient of $M$ which defined by the
following formula
\[
\beta(M):=\sup{\ \{\mbox{\rm osc}(M(f))\;;\;\;f\in\mbox{\rm Osc}_{1}%
(E^{\prime})\}}%
\]
where $\mbox{Osc}_{1}(E^{\prime})$ stands the set of $\mathcal{E^{\prime}}%
$-measurable functions $f$ with oscillation less than or equal to 1. The
kernel $M$ also generates a dual operator $\nu\mapsto\nu M$ from
$\mathcal{M}(E)$ into $\mathcal{M}(E^{\prime})$ defined by $(\nu
M)(f):=\nu(M(f))$. A Markov kernel is a positive and bounded integral operator
$M$ with $M(1)=1$. Given a pair of bounded integral operators $(M_{1},M_{2})$,
we let $(M_{1}M_{2})$ the composition operator defined by $(M_{1}%
M_{2})(f)=M_{1}(M_{2}(f))$. For time homogenous state spaces, we denote by
$M^{m}\,\left(  =M^{m-1}M=MM^{m-1}\right)  $ the $m$-th composition of a given
bounded integral operator $M$, with $m\geq1$.

Given a positive function $G$ on $E$, let $\Psi_{G}~:~\nu\in\mathcal{P}%
(E)\mapsto\Psi_{G}(\nu)\in\mathcal{P}(E)$ be the Bayes transformation defined
by
\[
\Psi_{G}(\nu)(dx):=\frac{1}{\nu(G)}~G(x)~\nu(dx)
\]
The definitions above also apply if $\nu$\ is a density and $M$ is a
transition density. In this case all instances of $\nu(dx)$ should be replaced
with $\nu(x)dx$\ and $M(x,dx^{\prime})$ by $M(x,x^{\prime})dx^{\prime}$ where
$dx$ and $dx^{\prime}$ are the dominating measures.

The proofs below will apply to any fixed sequence of observation
$\{y_{n}\}_{n\geq0}$ and it is convenient to introduce the following
transition kernels,
\begin{align*}
Q_{n}(x_{n-1},dx_{n})  &  =g(y_{n-1}|x_{n-1})f(x_{n}|x_{n-1})dx_{n},\quad
n\geq1,\\
\mbox{\rm and}\quad Q_{k,n}  &  =Q_{k+1}Q_{k+2}\ldots Q_{n},\quad0\leq k\leq
n,
\end{align*}
with the convention that $Q_{n,n}=Id$, the identity operator. Note that
$Q_{k,n}(1)=p(y_{k:n-1}|x_{k})$.\ 

Let the mapping $\Phi_{k}:\mathcal{P}(\mathcal{X})\rightarrow\mathcal{P}%
(\mathcal{X})$, $k\geq1$, be defined as follows
\[
\Phi_{k}(\nu)(dx_{k})=\frac{\nu Q_{k}(dx_{k})}{\nu Q_{k}(1)}.
\]
Several probability densities and their SMC\ approximations are introduced to
simplify the exposition.\ The \emph{predicted filter} is denoted by
\[
\eta_{n}(dx_{n})=p(dx_{n}\left\vert y_{0:n-1}\right.  )
\]
with the understanding that $\eta_{0}(dx_{0})$ is the initial distribution of
$X_{0}$. Let $\eta_{n}^{N}$\ denote its SMC approximation with $N$ particles.
(This notation for the SMC approximation is opted for, instead of the usual
$\widehat{\eta}_{n}$, to make the number of particles explicit.) The bounded
integral operator $D_{k,n}$ from $\mathcal{X}$ into $\mathcal{X}^{n+1}$ is
defined as
\begin{equation}
D_{k,n}(S_{n})(x_{k}):=\int p(dx_{0:k-1}\left\vert x_{k},y_{0:k-1}\right.
)\left(  \prod\limits_{q=k}^{n-1}g(y_{q}|x_{q})f(x_{q+1}|x_{q})\right)
S_{n}(x_{0:n})dx_{k+1:n} \label{defiDpn}%
\end{equation}
$D_{k,n}$ is defined for any pair of time indices $k,n$ satisfying $0\leq
k\leq n$ with the convention that $p(x_{0:k-1}\left\vert x_{k},y_{0:k-1}%
\right.  )=1$ for $k=0$ and $\prod\emptyset=1$. The SMC\ approximation,
$D_{k,n}^{N}$, is%
\begin{equation}
D_{k,n}^{N}(S_{n})(x_{k}):=\int p^{N}(dx_{0:k-1}\left\vert x_{k}%
,y_{0:k-1}\right.  )\left(  \prod\limits_{q=k}^{n-1}g(y_{q}|x_{q}%
)f(x_{q+1}|x_{q})\right)  S_{n}(x_{0:n})dx_{k+1:n}%
\end{equation}
where $p^{N}(dx_{0:k-1}\left\vert x_{k},y_{0:k-1}\right.  )$ is the
SMC\ approximation of $p(dx_{0:k-1}\left\vert x_{k},y_{0:k-1}\right.  )$
obtained from the SMC-FFBS approximation of Section \ref{sec:FFBS}, i.e.%
\begin{equation}
p^{N}(dx_{0:k-1}\left\vert x_{k},y_{0:k-1}\right.  )=\prod\limits_{q=1}%
^{k}M_{q,\eta_{q-1}^{N}}(x_{q},dx_{q-1})
\end{equation}
where the backward Markov transition kernels $M_{q,\eta_{q-1}^{N}}$\ are
defined through
\begin{equation}
M_{q,\eta_{q-1}^{N}}(x_{q},dx_{q-1})=\frac{\eta_{q-1}^{N}(dx_{q-1}%
)g(y_{q-1}|x_{q-1})f(x_{q}|x_{q-1})}{\eta_{q-1}^{N}(g(y_{q-1}|\cdot
)f(x_{q}|\cdot))}.
\end{equation}
It is easily established that the SMC-FFBS\ approximation of $p(dx_{k}%
|y_{0:n})$, $k\leq n$, is precisely the marginal of
\[
p^{N}(dx_{0:n-1}\left\vert x_{n},y_{0:n-1}\right.  )\widehat{p}(dx_{n}%
|y_{0:n})
\]
where $\widehat{p}(dx_{n}|y_{0:n})$\ was defined in
(\ref{eq:filteringdistribution}). Finally, we define
\[
P_{k,n}=\frac{D_{k,n}}{D_{k,n}(1)}\quad\text{and}\quad P_{k,n}^{N}%
=\frac{D_{k,n}^{N}}{D_{k,n}^{N}(1)}.
\]
The following estimates are a straightforward consequence of Assumption (A).
For time indices $0\leq k\leq q\leq n$,
\begin{equation}
b_{k,n}=\sup_{x_{k},x_{k}^{\prime}}{\frac{Q_{k,n}(1)(x_{k})}{Q_{k,n}%
(1)(x_{k}^{\prime})}}\leq\rho^{2}\delta^{2},\quad\beta\left(  \frac
{Q_{k,q}(x_{k},dx_{q})Q_{q,n}(1)(x_{q})}{Q_{k,q}(Q_{q,n}(1))}\right)
\leq\left(  1-\rho^{-4}\right)  ^{(q-k)}, \label{eq:contractionEst}%
\end{equation}
and for $0<k\leq q$,
\begin{equation}
M_{k,\eta}(x,dz)\leq\rho^{4}~M_{k,\eta}(x^{\prime},dz)\Longrightarrow
\beta\left(  M_{q,\eta_{q-1}^{N}}\ldots M_{k,\eta_{k-1}^{N}}\right)
\leq\left(  1-\rho^{-4}\right)  ^{q-k+1}. \label{eq:contractionEst2}%
\end{equation}

Several auxiliary results are now presented. For any $\varphi\in
\mathcal{B}(\mathcal{X})$, let%
\begin{equation}
V_{k}^{N}(\varphi)=\eta_{k}^{N}(\varphi)-\Phi_{k}(\eta_{k-1}^{N})(\varphi).
\label{eq:particleV_p}%
\end{equation}
The following is an almost sure Kintchine type inequality\ \cite[Lemma
7.3.3]{delmoral2004}.

\begin{lem}
\label{lem:kintchine} Let $\mathcal{F}_{n}^{N}:=\sigma\left(  \left\{
X_{k}^{(i)};0\leq k\leq n,1\leq i\leq N,\right\}  \right)  $, $n\geq0$, be the
natural filtration associated with the $N$-particle approximation model and
$\mathcal{F}_{-1}^{N}$ be the trivial sigma field. For any $r\geq1$, there
exist a finite (non random) constant $a_{r}$ such that the following
inequality holds for all $k\geq0$ and $\mathcal{F}_{k-1}^{N}\ $measurable
functions $\varphi_{k}^{N}\in\mathcal{B}(\mathcal{X})$\ s.t. $\mbox{\rm
osc}(\varphi_{k}^{N})\leq1$,%
\[
\mathbb{E}\left(  \left\vert \sqrt{N}V_{k}^{N}(\varphi_{k}^{N})\right\vert
^{r}~\left\vert ~\mathcal{F}_{k-1}^{N}\right.  \right)  ^{\frac{1}{r}}\leq
a_{r}.
\]

\end{lem}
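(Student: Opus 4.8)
The plan is to recognize $\sqrt{N}\,V_k^N(\varphi_k^N)$ as a normalized sum of conditionally independent, centered, bounded random variables and then invoke a standard Marcinkiewicz-Zygmund moment inequality. First I would exploit the mean-field structure of the bootstrap particle filter: conditionally on $\mathcal{F}_{k-1}^N$, the particles $X_k^{(1)},\ldots,X_k^{(N)}$ are independent and identically distributed with common law $\Phi_k(\eta_{k-1}^N)$, since they are produced from $\eta_{k-1}^N$ by the selection step (the Bayes transformation $\Psi_{g(y_{k-1}|\cdot)}$) followed by independent mutation through $f$. Consequently $\eta_k^N(\varphi_k^N)=N^{-1}\sum_{i=1}^N \varphi_k^N(X_k^{(i)})$, and because $\varphi_k^N$ is $\mathcal{F}_{k-1}^N$-measurable it acts as a fixed function under the conditional law, so that $\Phi_k(\eta_{k-1}^N)(\varphi_k^N)=\mathbb{E}(\varphi_k^N(X_k^{(i)})\mid \mathcal{F}_{k-1}^N)$ is exactly the conditional mean of each summand. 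Writing $Z_i:=\varphi_k^N(X_k^{(i)})-\Phi_k(\eta_{k-1}^N)(\varphi_k^N)$, I obtain $\sqrt{N}\,V_k^N(\varphi_k^N)=N^{-1/2}\sum_{i=1}^N Z_i$, where, conditionally on $\mathcal{F}_{k-1}^N$, the $Z_i$ are i.i.d., centered, and bounded by $\mbox{osc}(\varphi_k^N)\le 1$.

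Next I would apply, conditionally on $\mathcal{F}_{k-1}^N$, the Marcinkiewicz-Zygmund inequality for sums of independent centered random variables: there is a universal finite constant $B_r$ depending only on $r$ such that $\mathbb{E}(|\sum_{i=1}^N Z_i|^r\mid \mathcal{F}_{k-1}^N)\le B_r\,\mathbb{E}((\sum_{i=1}^N Z_i^2)^{r/2}\mid \mathcal{F}_{k-1}^N)$. Since $|Z_i|\le 1$ gives $\sum_{i=1}^N Z_i^2\le N$ almost surely, the right-hand side is at most $B_r N^{r/2}$. Accounting for the $N^{-1/2}$ normalization and taking the $1/r$-th root then yields $\mathbb{E}(|\sqrt{N}\,V_k^N(\varphi_k^N)|^r\mid \mathcal{F}_{k-1}^N)^{1/r}\le B_r^{1/r}=:a_r$, a bound that is uniform in $k$, in $N$, and in the choice of $\varphi_k^N$, as claimed.

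The only genuinely delicate points, which I would take care to justify rather than assume, are the conditional i.i.d.\ representation and the identification of the centering term. For the former, one must verify that the resampling-plus-mutation mechanism of the bootstrap filter indeed produces particles that are conditionally independent given the previous generation, with the stated common law $\Phi_k(\eta_{k-1}^N)$; for the latter, the $\mathcal{F}_{k-1}^N$-measurability of $\varphi_k^N$ is precisely what guarantees that $\Phi_k(\eta_{k-1}^N)(\varphi_k^N)$ coincides with the conditional expectation of the summands, so that the $Z_i$ are genuinely centered. Once these are in place the estimate is a direct consequence of a classical moment inequality for bounded independent variables; in particular $a_r$ inherits its dependence only on $r$ through the universal Marcinkiewicz-Zygmund constant, which is exactly the uniformity the lemma asserts. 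This is essentially the content of the cited reference, so in the write-up I would either reproduce the short conditioning argument above or simply reduce to it.
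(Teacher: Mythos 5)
Your proposal is correct. Note that the paper does not prove this lemma at all: it is quoted verbatim from \cite{delmoral2004} (Lemma 7.3.3 there), and your argument --- conditional i.i.d.\ sampling of $X_k^{(1)},\ldots,X_k^{(N)}$ from $\Phi_k(\eta_{k-1}^N)$ given $\mathcal{F}_{k-1}^N$ (valid for the bootstrap filter with multinomial resampling), centering via the $\mathcal{F}_{k-1}^N$-measurability of $\varphi_k^N$, and a conditional Marcinkiewicz--Zygmund/Khintchine bound with $\sum_i Z_i^2\leq N$ --- is essentially the standard proof of that cited result, so it matches the intended justification rather than offering a genuinely different route.
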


This inequality\ may be used to derive the following $\mathbb{L}_{r}$ error
estimate\ \cite[Theorem 7.4.4]{delmoral2004}.

\begin{lem}
\label{lem:lpErrorFilter}For any $r\geq1$, there exists a constant $a_{r}$
such that the following inequality holds for all $k\geq0$ and $\varphi
\in\mathcal{B}(\mathcal{X})$\ s.t. $\mbox{\rm osc}(\varphi)\leq1$,
\begin{equation}
\sqrt{N}\mathbb{E}\left(  \left\vert [\eta_{n}^{N}-\eta_{n}](\varphi
)\right\vert ^{r}\right)  ^{\frac{1}{r}}\leq a_{r}~\sum_{k=0}^{n}%
~b_{k,n}~\beta\left(  \frac{Q_{k,n}}{Q_{k,n}(1)}\right)  .
\label{eq:lpErrorFilter}%
\end{equation}

\end{lem}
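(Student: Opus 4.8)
The plan is to reduce the global error $[\eta_n^N-\eta_n](\varphi)$ to a telescoping sum of the one-step sampling errors $V_k^N$ of \eqref{eq:particleV_p}, and then to control each summand by the conditional Khintchine inequality of Lemma~\ref{lem:kintchine}. First I would introduce the composite normalised mapping $\Phi_{k,n}:=\Phi_n\circ\Phi_{n-1}\circ\cdots\circ\Phi_{k+1}$, with $\Phi_{n,n}=\mathrm{Id}$, so that $\Phi_{k,n}(\eta_k)=\eta_n$ and $\Phi_{k,n}\circ\Phi_k=\Phi_{k-1,n}$. Since $\eta_k^N=\Phi_k(\eta_{k-1}^N)+V_k^N$ as signed measures for $k\geq1$ and $\eta_0^N=\eta_0+V_0^N$, the semigroup property gives the exact decomposition
\[
[\eta_n^N-\eta_n](\varphi)=\sum_{k=0}^{n}\left[\Phi_{k,n}(\eta_k^N)-\Phi_{k,n}\!\left(\Phi_k(\eta_{k-1}^N)\right)\right](\varphi),
\]
with the convention $\Phi_0(\eta_{-1}^N):=\eta_0$, so that the $k=0$ summand is precisely the initial sampling error $V_0^N$.

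Next I would linearise each summand. Writing $G_{k,n}:=Q_{k,n}(1)$ and $\overline{Q}_{k,n}:=Q_{k,n}/Q_{k,n}(1)$, which is a Markov kernel, one checks directly that the flow factorises as a Bayes update followed by a Markov transport, $\Phi_{k,n}(\mu)=\Psi_{G_{k,n}}(\mu)\,\overline{Q}_{k,n}$. Setting $\mu=\eta_k^N$, $\nu=\Phi_k(\eta_{k-1}^N)$, $f=\overline{Q}_{k,n}(\varphi)$ and $c=\Phi_{k,n}(\nu)(\varphi)$, the difference formula for the Bayes transformation $\Psi_{G}$ yields the exact identity
\[
\left[\Phi_{k,n}(\eta_k^N)-\Phi_{k,n}(\nu)\right](\varphi)=\frac{V_k^N\!\left(G_{k,n}\,(f-c)\right)}{\eta_k^N(G_{k,n})}.
\]
The essential observation is that $c$ is a $\nu$-weighted average of the values of $f$, hence $\|f-c\|\leq\mbox{osc}(f)=\mbox{osc}(\overline{Q}_{k,n}(\varphi))\leq\beta(Q_{k,n}/Q_{k,n}(1))$ by the definition of the Dobrushin coefficient and $\mbox{osc}(\varphi)\leq1$. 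Consequently $G_{k,n}(f-c)$ is $\mathcal{F}_{k-1}^N$-measurable with oscillation at most $2\|G_{k,n}\|\,\beta(Q_{k,n}/Q_{k,n}(1))$, while Assumption~(A) gives $\eta_k^N(G_{k,n})\geq\inf_x G_{k,n}(x)$. Since $b_{k,n}=\|G_{k,n}\|/\inf_x G_{k,n}(x)$, dividing produces a deterministic prefactor bounded by $2\,b_{k,n}\,\beta(Q_{k,n}/Q_{k,n}(1))$.

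It then remains to assemble the $\mathbb{L}_r$ estimate. By Minkowski's inequality I would pass the norm inside the sum and, for each $k$, write the $k$-th summand as this deterministic prefactor times $V_k^N(\widetilde{g}_k)$, where $\widetilde{g}_k$ is $G_{k,n}(f-c)$ rescaled to oscillation $\leq1$ and is $\mathcal{F}_{k-1}^N$-measurable. Conditioning on $\mathcal{F}_{k-1}^N$ and invoking Lemma~\ref{lem:kintchine} gives $\mathbb{E}(|\sqrt{N}\,V_k^N(\widetilde{g}_k)|^r\mid\mathcal{F}_{k-1}^N)^{1/r}\leq a_r$, and the tower property removes the conditioning. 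Summing over $k$ yields exactly the right-hand side of \eqref{eq:lpErrorFilter} after absorbing the factor $2$ into $a_r$.

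The main obstacle is the linearisation step. One must simultaneously (i) choose the centering $c$ so that only the \emph{oscillation} of $\overline{Q}_{k,n}(\varphi)$, not its supremum, enters the bound, which is what makes the contraction coefficient $\beta(Q_{k,n}/Q_{k,n}(1))$ appear, and (ii) ensure the function fed to Lemma~\ref{lem:kintchine} is $\mathcal{F}_{k-1}^N$-measurable. The latter forces one to handle the remaining random factor $\eta_k^N(G_{k,n})$ through the deterministic lower bound of Assumption~(A) and to charge the fluctuation of the potential $G_{k,n}$ to $b_{k,n}$. Once this decomposition is in hand, the Minkowski and conditional-moment steps are routine.
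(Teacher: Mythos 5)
Your proof is correct and is essentially the same argument as the one the paper relies on (it cites \cite[Theorem 7.4.4]{delmoral2004} for this lemma): telescoping $\eta_n^N-\eta_n$ through the normalized semigroup $\Phi_{k,n}$, exploiting $\nu\bigl(G_{k,n}(f-c)\bigr)=0$ to linearize each increment as $V_k^N$ applied to an $\mathcal{F}_{k-1}^N$-measurable function of controlled oscillation, bounding the prefactor by $b_{k,n}\,\beta\bigl(Q_{k,n}/Q_{k,n}(1)\bigr)$, and finishing with Minkowski and the conditional Khintchine inequality of Lemma~\ref{lem:kintchine}. This is also precisely the machinery the paper itself deploys in the Appendix to prove Theorem~\ref{nonasymptheo}, so your reconstruction matches the intended route.
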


A time-uniform bound for (\ref{eq:lpErrorFilter}) may be obtained by using the
estimates in (\ref{eq:contractionEst})-(\ref{eq:contractionEst2}). The final
auxiliary result is the following.

\begin{lem}
\label{lem:Dpn}For time indices $1\leq k\leq n$,
\[
\eta_{k-1}^{N}D_{k-1,n}^{N}(S_{n})=(\eta_{k-1}^{N}Q_{k})(D_{k,n}^{N}(S_{n}))
\]

\end{lem}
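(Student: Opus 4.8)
The plan is to expand both sides by their defining formulas and reduce the identity to a single Bayes-type "reversal" relation between the forward kernel $Q_{k}$ pushed through $\eta_{k-1}^{N}$ and the backward kernel $M_{k,\eta_{k-1}^{N}}$. First I would write out the left-hand side. Since $D_{k-1,n}^{N}(S_{n})(x_{k-1})$ integrates $p^{N}(dx_{0:k-2}|x_{k-1},y_{0:k-2})=\prod_{q=1}^{k-1}M_{q,\eta_{q-1}^{N}}(x_{q},dx_{q-1})$ against the weight $\prod_{q=k-1}^{n-1}g(y_{q}|x_{q})f(x_{q+1}|x_{q})$, integrating this against $\eta_{k-1}^{N}(dx_{k-1})$ and peeling off the $q=k-1$ factor $g(y_{k-1}|x_{k-1})f(x_{k}|x_{k-1})$ of the product gives
\[
\int \eta_{k-1}^{N}(dx_{k-1})\, g(y_{k-1}|x_{k-1})\, f(x_{k}|x_{k-1}) \prod_{q=1}^{k-1} M_{q,\eta_{q-1}^{N}}(x_{q},dx_{q-1}) \Big(\prod_{q=k}^{n-1} g(y_{q}|x_{q})f(x_{q+1}|x_{q})\Big) S_{n}(x_{0:n})\, dx_{k:n}.
\]

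Second, I would expand the right-hand side, taking care to give the outer integration variable of $\eta_{k-1}^{N}Q_{k}$ a name distinct from the dummy $x_{k-1}$ appearing inside $D_{k,n}^{N}(S_{n})$. Since $D_{k,n}^{N}(S_{n})(x_{k})$ carries the leading factor $M_{k,\eta_{k-1}^{N}}(x_{k},dx_{k-1})$ (the $q=k$ term of $p^{N}(dx_{0:k-1}|x_{k},y_{0:k-1})$), the whole right-hand side is the product $(\eta_{k-1}^{N}Q_{k})(dx_{k})\,M_{k,\eta_{k-1}^{N}}(x_{k},dx_{k-1})$ multiplying the same remaining backward kernels $\prod_{q=1}^{k-1}M_{q,\eta_{q-1}^{N}}$ and the same weight $\prod_{q=k}^{n-1}g(y_{q}|x_{q})f(x_{q+1}|x_{q})$ as above.

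The crux is then the elementary reversal identity
\[
(\eta_{k-1}^{N}Q_{k})(dx_{k})\, M_{k,\eta_{k-1}^{N}}(x_{k},dx_{k-1}) = \eta_{k-1}^{N}(dx_{k-1})\, Q_{k}(x_{k-1},dx_{k}),
\]
which I would verify by direct computation: $(\eta_{k-1}^{N}Q_{k})(dx_{k}) = \eta_{k-1}^{N}\big(g(y_{k-1}|\cdot)f(x_{k}|\cdot)\big)\,dx_{k}$ is exactly the normalizing denominator in the definition of $M_{k,\eta_{k-1}^{N}}$, so the two cancel and leave $\eta_{k-1}^{N}(dx_{k-1})\,g(y_{k-1}|x_{k-1})f(x_{k}|x_{k-1})\,dx_{k}=\eta_{k-1}^{N}(dx_{k-1})\,Q_{k}(x_{k-1},dx_{k})$. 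Substituting this into the right-hand side immediately reproduces the left-hand display above, establishing the equality.

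I do not anticipate a serious obstacle; the only point needing care is the bookkeeping of integration variables (the collision at index $k-1$ between the outer measure on the right and the dummy variable inside $D_{k,n}^{N}$), together with the observation that the entire content of the lemma is the reversal identity, which says precisely that $M_{k,\eta_{k-1}^{N}}$ is the time-reversal of $Q_{k}$ relative to $\eta_{k-1}^{N}$. Everything else is matching the product $\prod_{q=k-1}^{n-1}$ against $Q_{k}\cdot\prod_{q=k}^{n-1}$ and recognizing the telescoped backward kernels $\prod_{q=1}^{k-1}M_{q,\eta_{q-1}^{N}}$ as $p^{N}(dx_{0:k-2}|x_{k-1},y_{0:k-2})$.
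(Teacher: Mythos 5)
Your proof is correct and takes essentially the same route as the paper's: both expand the definitions of $D_{k-1,n}^{N}$ and $D_{k,n}^{N}$ and reduce the claim to the reversal identity $\eta_{k-1}^{N}(dx_{k-1})\,Q_{k}(x_{k-1},dx_{k})=(\eta_{k-1}^{N}Q_{k})(dx_{k})\,M_{k,\eta_{k-1}^{N}}(x_{k},dx_{k-1})$, which the paper merely states and you additionally verify by cancelling the normalizing constant in the definition of $M_{k,\eta_{k-1}^{N}}$.
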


\begin{proof}%
\begin{equation*}
\begin{array}{l}
\eta _{k-1}^{N}D_{k-1,n}^{N}(S_{n}) \\
=\int \eta _{k-1}^{N}(dx_{k-1})p^{N}(dx_{0:k-2}\left\vert
x_{k-1},y_{0:k-2}\right. )\left(
\prod\limits_{q=k}^{n}Q_{q}(x_{q-1},dx_{q})\right) S_{n}(x_{0:n}) \\
=\int \eta _{k-1}^{N}(dx_{k-1})Q_{k}(x_{k-1},dx_{k}) \\
\hskip3cm\times p^{N}(dx_{0:k-2}\left\vert x_{k-1},y_{0:k-2}\right. )\left(
\prod\limits_{q=k+1}^{n}Q_{q}(x_{q-1},dx_{q})\right) S_{n}(x_{0:n})%
\end{array}%
\end{equation*}%
The result follows upon noting that
\begin{equation*}
\eta _{k-1}^{N}(dx_{k-1})Q_{k}(x_{k-1},dx_{k})=\eta
_{k-1}^{N}Q_{k}(dx_{k})~M_{k,\eta _{k-1}^{N}}(x_{k},dx_{k-1}).
\end{equation*}
\end{proof}

To prove Theorem \ref{nonasymptheo}, the same semigroup techniques of
\cite[Section 7.4.3]{delmoral2004} are employed.

\begin{proof}  The following decomposition is central
\begin{equation*}
\widehat{\mathcal{S}_{n}}-\mathcal{S}_{n}=\sum_{0\leq k\leq n}\left( \frac{%
\eta _{k}^{N}D_{k,n}^{N}(S_{n})}{\eta _{k}^{N}D_{k,n}^{N}(1)}-\frac{\eta
_{k-1}^{N}D_{k-1,n}^{N}(S_{n})}{\eta _{k-1}^{N}D_{k-1,n}^{N}(1)}\right)
\end{equation*}%
with the convention that $\eta _{-1}^{N}D_{-1,n}^{N}=\eta
_{0}(dx_{0})\prod\limits_{q=1}^{n}Q_{q}(x_{q-1},dx_{q})$, for $k=0$. Lemma %
\ref{lem:Dpn} states that
\begin{equation*}
\eta _{k-1}^{N}D_{k-1,n}^{N}(S_{n})=(\eta
_{k-1}^{N}Q_{k})(D_{k,n}^{N}(S_{n}))
\end{equation*}%
and therefore the decomposition can be also written as
\begin{equation}
\widehat{\mathcal{S}_{n}}-\mathcal{S}_{n}=\sum_{0\leq k\leq n}\left( \frac{%
\eta _{k}^{N}D_{k,n}^{N}(S_{n})}{\eta _{k}^{N}D_{k,n}^{N}(1)}-\frac{\Phi
_{k}(\eta _{k-1}^{N})(D_{k,n}^{N}(S_{n}))}{\Phi _{k}(\eta
_{k-1}^{N})(D_{k,n}^{N}(1))}\right)   \label{eq:decompSn}
\end{equation}%
with the convention $\Phi _{0}(\eta _{-1}^{N})=\eta _{0}$, for $k=0$. Let
\begin{equation*}
\widetilde{S}_{k,n}^{N}=S_{n}-\frac{\Phi _{k}(\eta
_{k-1}^{N})(D_{k,n}^{N}(S_{n}))}{\Phi _{k}(\eta _{k-1}^{N})(D_{k,n}^{N}(1))}.
\end{equation*}%
Then every term in the r.h.s. of (\ref{eq:decompSn}) takes the following
form
\begin{equation}
\frac{\eta _{k}^{N}D_{k,n}^{N}(\widetilde{S}_{k,n}^{N})}{\eta
_{k}^{N}D_{k,n}^{N}(1)}=\frac{\eta _{k}Q_{k,n}(1)}{\eta _{k}^{N}Q_{k,n}(1)}%
\times V_{k}^{N}\left( \overline{D}_{k,n}^{N}(\widetilde{S}%
_{k,n}^{N})\right)   \label{eq:decompSnTerm}
\end{equation}%
where the integral operators $\overline{D}_{k,n}^{N}$ are defined as
follows,
\begin{equation*}
\overline{D}_{k,n}^{N}(S_{n})=\frac{D_{k,n}^{N}(S_{n})}{\eta _{k}Q_{k,n}(1)}.
\end{equation*}%
Finally, using (\ref{eq:decompSn}) and (\ref{eq:decompSnTerm}), $\widehat{%
\mathcal{S}_{n}}-\mathcal{S}_{n}$\ is expressed as
\begin{equation*}
\sqrt{N}\left( \widehat{\mathcal{S}_{n}}-\mathcal{S}_{n}\right)
=I_{n}^{N}(S_{n})+R_{n}^{N}(S_{n})
\end{equation*}%
where the first order term is
\begin{equation*}
I_{n}^{N}(S_{n}):=\sum_{0\leq k\leq n}\sqrt{N}V_{k}^{N}\left( \overline{D}%
_{k,n}^{N}(\widetilde{S}_{k,n}^{N})\right)
\end{equation*}%
and the second order remainder term is
\begin{equation*}
R_{n}^{N}(S_{n}):=\sum_{0\leq k\leq n}\frac{1}{\eta _{k}^{N}\overline{D}%
_{k,n}(1)}\sqrt{N}\left( \eta _{k}-\eta _{k}^{N}\right) \overline{D}%
_{k,n}(1)\times V_{k}^{N}\left( \overline{D}_{k,n}^{N}(\widetilde{S}%
_{k,n}^{N})\right) .
\end{equation*}%
The non-asymptotic variance bound is based on the triangle inequality
\begin{equation}
\mathbb{E}\left\{ N\left( \widehat{\mathcal{S}_{n}}-\mathcal{S}_{n}\right)
^{2}\right\} \leq \left( \mathbb{E}\left\{ I_{n}^{N}(S_{n})^{2}\right\} ^{%
\frac{1}{2}}+\mathbb{E}\left\{ R_{n}^{N}(S_{n})^{2}\right\} ^{\frac{1}{2}%
}\right) ^{2},  \label{eq:nonAsympVarTriaInequality}
\end{equation}%
and bounds are derived below for the individual expressions on the
right-hand side of this equation.\linebreak \qquad Using the fact that $%
\left\{ V_{k}^{N}\left( \overline{D}_{k,n}^{N}(\widetilde{S}%
_{k,n}^{N})\right) \right\} _{0\leq k\leq n}$\ is zero mean and uncorrelated,%
\begin{equation}
\mathbb{E}\left( I_{n}^{N}(S_{n})^{2}\right) =\sum_{0\leq k\leq n}N\mathbb{E}%
\left\{ V_{k}^{N}\left( \overline{D}_{k,n}^{N}(\widetilde{S}%
_{k,n}^{N})\right) ^{2}\right\} .  \label{eq:fisrtOrderError}
\end{equation}%
The following results are needed to bound the right-hand side of (\ref%
{eq:fisrtOrderError}). First, observe that $D_{k,n}^{N}(1)=Q_{k,n}(1)$, and $%
\overline{D}_{k,n}^{N}(1)=\overline{D}_{k,n}(1)$. Now using the
decomposition,
\begin{equation*}
\begin{array}{l}
\overline{D}_{k,n}^{N}(\widetilde{S}_{k,n}^{N})(x_{k}) \\
=\overline{D}_{k,n}(1)(x_{k})\times \int ~\left[
P_{k,n}^{N}(S_{n})(x_{k})-P_{k,n}^{N}(S_{n})(x_{k}^{\prime })\right] ~\Psi
_{Q_{k,n}(1)}(\Phi _{k}(\eta _{k-1}^{N}))(dx_{k}^{\prime }),%
\end{array}%
\end{equation*}%
it follows that
\begin{equation}
\left\Vert \overline{D}_{k,n}^{N}(\widetilde{S}_{k,n}^{N})\right\Vert \leq
b_{k,n}~\mbox{\rm osc}(P_{k,n}^{N}(S_{n}))
\end{equation}%
For linear functionals of the form (\ref{eq:additiveFunctionalSimple}), it
is easily checked that
\begin{equation*}
D_{k,n}^{N}(S_{n})=Q_{k,n}(1)~\sum_{0\leq q\leq k}\left[ M_{k,\eta
_{k-1}^{N}}\ldots M_{q+1,\eta _{q}^{N}}\right] (s_{q})+\sum_{k<q\leq
n}Q_{k,q}(s_{q}~Q_{q,n}(1))
\end{equation*}%
with the convention $M_{k,\eta _{k-1}^{N}}\ldots M_{k+1,\eta _{k}^{N}}=Id$,
the identity operator, for $q=k$. Recalling that $D_{k,n}^{N}(1)=Q_{k,n}(1)$%
, we conclude that
\begin{equation*}
P_{k,n}^{N}(S_{n})=s_{k}+\sum_{0\leq q<k}\left[ M_{k,\eta _{k-1}^{N}}\ldots
M_{q+1,\eta _{q}^{N}}\right] (s_{q})+\sum_{k<q\leq n}\frac{%
Q_{k,q}(Q_{q,n}(1)~s_{q})}{Q_{k,q}(Q_{q,n}(1))}
\end{equation*}%
and therefore
\begin{equation*}
P_{k,n}^{N}(S_{n})=\sum_{0\leq q<k}\left[ M_{k,\eta _{k-1}^{N}}\ldots
M_{q+1,\eta _{q}^{N}}\right] (s_{q})+\sum_{k\leq q\leq n}\frac{%
Q_{k,q}(Q_{q,n}(1)~s_{q})}{Q_{k,q}(Q_{q,n}(1))}
\end{equation*}%
Thus,
\begin{equation}
\begin{array}{l}
\mbox{\rm osc}(P_{k,n}^{N}(S_{n})) \\
\leq \sum_{0\leq q<k}\beta \left( M_{k,\eta _{k-1}^{N}}\ldots M_{q+1,\eta
_{q}^{N}}\right) \mbox{\rm osc}(s_{q})+\sum_{k\leq q\leq n}~\beta \left(
\frac{Q_{k,q}(x_{k},dx_{q})Q_{q,n}(1)(x_{q})}{Q_{k,q}(Q_{q,n}(1))}\right) ~%
\mbox{\rm osc}(s_{q})%
\end{array}
\label{eq:oscBoundParticlePpn}
\end{equation}%
Using the estimates in (\ref{eq:contractionEst}) and (\ref%
{eq:contractionEst2}) for the contraction coefficients, and the estimate in (%
\ref{eq:contractionEst}) for\ $b_{k,n}$, it follows that there exists some
finite (non random) constant $c$ such that the bound
\begin{equation}
\left\Vert \overline{D}_{k,n}^{N}(\widetilde{S}_{k,n}^{N})\right\Vert \leq c
\end{equation}%
holds for any pair of time indexes $k,n$ satisfying $0\leq k\leq n$,
particle number $N$\ and choice of functions $\{s_{k}\}_{0\leq k\leq n}$.
The desired bound for (\ref{eq:oscBoundParticlePpn}) is now obtained by
combining this result with Lemma \ref{lem:kintchine}:%
\begin{align}
\mathbb{E}\left( I_{n}^{N}(S_{n})^{2}\right) & =\sum_{0\leq k\leq n}N\mathbb{%
E}\left( V_{k}^{N}(\overline{D}_{k,n}^{N}(\widetilde{S}_{k,n}^{N}))^{2}%
\right)   \notag \\
& \leq d(n+1)  \label{eq:fisrtOrderError_meanSquare}
\end{align}%
where $d$ is a constant whose value does not depend on $(n,N,S_{n})$%
.\linebreak \qquad Concerning the term $\mathbb{E}\left\{
R_{n}^{N}(S_{n})^{2}\right\} $\ in (\ref{eq:nonAsympVarTriaInequality}).%
\begin{align}
\mathbb{E}\left\{ R_{n}^{N}(S_{n})^{2}\right\} ^{\frac{1}{2}}& \leq
\sum_{0\leq k\leq n}\frac{1}{\sqrt{N}}\mathbb{E}\left\{ \left[ \frac{1}{\eta
_{k}^{N}\overline{D}_{k,n}(1)}\sqrt{N}\left( \eta _{k}-\eta _{k}^{N}\right)
\overline{D}_{k,n}(1)\times \sqrt{N}V_{k}^{N}\left( \overline{D}_{k,n}^{N}(%
\widetilde{S}_{k,n}^{N})\right) \right] ^{2}\right\} ^{\frac{1}{2}}  \notag
\\
& \leq \sum_{0\leq k\leq n}\frac{1}{\sqrt{N}}b_{k,n}\mathbb{E}\left\{ \left[
\sqrt{N}\left( \eta _{k}-\eta _{k}^{N}\right) \overline{D}_{k,n}(1)\times
\sqrt{N}V_{k}^{N}\left( \overline{D}_{k,n}^{N}(\widetilde{S}%
_{k,n}^{N})\right) \right] ^{2}\right\} ^{\frac{1}{2}}  \notag \\
& \leq \sum_{0\leq k\leq n}\frac{1}{\sqrt{N}}b_{k,n}\mathbb{E}\left\{ \left[
\sqrt{N}\left( \eta _{k}-\eta _{k}^{N}\right) \overline{D}_{k,n}(1)\right]
^{4}\right\} ^{\frac{1}{4}}  \notag \\
& \times \mathbb{E}\left\{ \left[ \sqrt{N}V_{k}^{N}\left( \overline{D}%
_{k,n}^{N}(\widetilde{S}_{k,n}^{N})\right) \right] ^{4}\right\} ^{\frac{1}{4}%
}  \notag \\
& \leq \frac{1}{\sqrt{N}}e(n+1)  \label{eq:secondOrderRem_L2}
\end{align}%
where $e$ is a constant whose value does not depend on $(n,N,S_{n})$. The
second line follows from (\ref{eq:contractionEst}) and the third by the
Cauchy-Schwartz inequality. The final line was arrived at by the same
reasoning used to derive bound (\ref{eq:fisrtOrderError_meanSquare}) and
Lemma \ref{lem:lpErrorFilter}. The assertion of the theorem may be verified
by substituting bounds (\ref{eq:fisrtOrderError_meanSquare}) and (\ref%
{eq:secondOrderRem_L2}) into (\ref{eq:nonAsympVarTriaInequality}).
\end{proof}

It is possible to write
\[
\sum\limits_{k=1}^{n}\gamma_{k}^{2}\prod\limits_{i=k+1}^{n}(1-\gamma_{i}%
)^{2}+\sum\limits_{k=2}^{n}\sum\limits_{i=1}^{k-1}\gamma_{i}^{2}%
(1-\gamma_{i+1})^{2}\cdots(1-\gamma_{n})^{2}%
\]
as the sum in (\ref{eq:stepDiscounting}) below.

\begin{lem}
\label{lem:stepDiscounting}Let $\alpha\in(0.5,1]$ and $\gamma_{n}=n^{-\alpha}$
for $n>0$. Then
\begin{equation}
\underset{n\rightarrow\infty}{\liminf}\quad\gamma_{n}^{2}+\sum_{i=1}%
^{n-1}(n+1-i)\gamma_{i}^{2}(1-\gamma_{i+1})^{2}\cdots(1-\gamma_{n})^{2}>0.
\label{eq:stepDiscounting}%
\end{equation}

\end{lem}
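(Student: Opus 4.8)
The plan is to prove the stronger claim that the sequence in (\ref{eq:stepDiscounting}) is bounded below by a strictly positive constant for all large $n$, which immediately gives the $\liminf$ bound. Write $a_n$ for the left-hand side of (\ref{eq:stepDiscounting}). Noting that $\gamma_n^2$ is precisely the $i=n$ contribution (there $n+1-i=1$ and the empty product equals $1$), I would first rewrite
\[
a_n=\sum_{i=1}^{n}(n+1-i)\,\gamma_i^2\prod_{j=i+1}^{n}(1-\gamma_j)^2 .
\]
Every summand is nonnegative, so it suffices to lower bound the contribution of a single well-chosen window of indices near $i=n$ and discard the rest. I would take $L_n=\lfloor \tfrac12 n^{\alpha}\rfloor$ and keep only $i\in\{n-L_n,\dots,n\}$. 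For all large $n$ this window lies in the large-index regime, since $n-L_n\ge n/2$; here the hypothesis $\alpha\le1$ is used, so that $n^{\alpha}\le n$ and hence $L_n\le n/2$.

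On this window I would bound the three factors separately. Writing $\ell:=n-i\in\{0,\dots,L_n\}$, one has $n+1-i=\ell+1$, and $\gamma_i^2=i^{-2\alpha}\ge n^{-2\alpha}$ because $i\le n$. The essential factor is the product. Using the elementary inequality $1-x\ge e^{-x-x^2}$ for $x\in[0,\tfrac12]$ (applicable since every $j$ in the window has $j\ge n/2$, so $\gamma_j\le(n/2)^{-\alpha}\le\tfrac12$ for large $n$), I obtain
\[
\prod_{j=i+1}^{n}(1-\gamma_j)\ \ge\ \exp\!\Big(-\sum_{j=i+1}^{n}\gamma_j-\sum_{j=i+1}^{n}\gamma_j^2\Big)\ \ge\ e^{-C}\exp\!\Big(-\sum_{j=i+1}^{n}\gamma_j\Big),
\]
where $C:=\sum_{j\ge2}\gamma_j^2<\infty$ because $2\alpha>1$; this is the only place the assumption $\alpha>1/2$ is used. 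Since $\gamma$ is decreasing and the window has at most $L_n$ terms with index $\ge n-L_n+1\ge n/2$, we get $\sum_{j=i+1}^{n}\gamma_j\le L_n\,(n/2)^{-\alpha}\le \tfrac12 n^{\alpha}\cdot 2^{\alpha}n^{-\alpha}\le 1$. Hence $\prod_{j=i+1}^{n}(1-\gamma_j)^2\ge e^{-2C-2}=:c_1>0$ uniformly over the window.

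Combining the three bounds, each retained summand is at least $(\ell+1)\,n^{-2\alpha}c_1$, so
\[
a_n\ \ge\ c_1\,n^{-2\alpha}\sum_{\ell=0}^{L_n}(\ell+1)\ \ge\ c_1\,n^{-2\alpha}\,\frac{L_n^2}{2}\ \ge\ \frac{c_1}{2}\,n^{-2\alpha}\Big(\tfrac14 n^{\alpha}\Big)^2=\frac{c_1}{32},
\]
valid for all $n$ large enough that $L_n\ge \tfrac14 n^{\alpha}$; this yields $\liminf_n a_n\ge c_1/32>0$. The only genuine subtlety I would flag is the calibration of the window width $L_n\asymp n^{\alpha}$: it must be small enough ($L_n\le n/2$, i.e.\ $\alpha\le1$) that $\sum_{j=i+1}^n\gamma_j$ stays bounded and the product does not collapse, yet large enough that the arithmetic sum $\sum_{\ell}(\ell+1)\asymp n^{2\alpha}$ exactly offsets the factor $n^{-2\alpha}$ coming from $\gamma_i^2$. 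The role of $\alpha>1/2$ is entirely confined to guaranteeing $\sum_j\gamma_j^2<\infty$, which keeps the second-order correction in the product estimate under control.
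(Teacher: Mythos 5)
Your proof is correct, and while it shares the paper's basic strategy---discard all but a window of indices near $i=n$ and exploit monotonicity of $\gamma$---the key mechanism is genuinely different. The paper keeps the much larger window $i\ge\lfloor n/2\rfloor$, lower bounds each product by the geometric quantity $\lambda_n^{\,n-i}$ with $\lambda_n=(1-\gamma_{\lfloor n/2\rfloor})^2$, and then evaluates the resulting sum via the identity $\sum_{j\ge 1}j\lambda^{j-1}=(1-\lambda)^{-2}$, checking that $\gamma_n^2/(1-\lambda_n)^2\to 2^{-2\alpha-2}$ while the discarded geometric tail vanishes; it also disposes of $\alpha=1$ separately as ``obvious'' (there the products telescope: $\prod_{j=i+1}^n(1-1/j)=i/n$). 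You instead calibrate the window to width $L_n\asymp n^{\alpha}$, exactly short enough that $\sum_j\gamma_j\le 1$ over the window, so the inequality $1-x\ge e^{-x-x^2}$ makes every product uniformly bounded below by a constant; the arithmetic series $\sum_{\ell\le L_n}(\ell+1)\asymp n^{2\alpha}$ then offsets $\gamma_i^2\asymp n^{-2\alpha}$. What the paper's route buys is the sharp asymptotic constant $2^{-2\alpha-2}$ (your $c_1/32$ is cruder); what yours buys is a single argument covering all $\alpha\in(0.5,1]$ with no separate case for $\alpha=1$ and no series identities or limit computations. One remark: your claim that $\alpha>1/2$ enters only through $C=\sum_j\gamma_j^2<\infty$ can be pushed further---over your window one has $\sum_{j=i+1}^n\gamma_j^2\le L_n(n/2)^{-2\alpha}\le 2^{2\alpha-1}n^{-\alpha}\to 0$, so the global summability of $\gamma_j^2$ is not needed at all, and your argument in fact proves the lemma for every $\alpha\in(0,1]$; the restriction in the statement reflects the stochastic-approximation context of the paper rather than the estimate itself.
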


\begin{proof}
Let $\left\lfloor a\right\rfloor $ denote the largest integer less than or
equal to $a$. Since the result is obvious for $\alpha=1$, let $\alpha
\in(0.5,1)$.%
\begin{align*}
&  \gamma_{n}^{2}+\sum_{i=\left\lfloor n/2\right\rfloor }^{n-1}(n+1-i)\gamma
_{i}^{2}(1-\gamma_{i+1})^{2}\cdots(1-\gamma_{n})^{2}\\
&  \geq\gamma_{n}^{2}+\gamma_{n}^{2}\sum_{i=\left\lfloor n/2\right\rfloor
}^{n-1}(n+1-i)(1-\gamma_{\left\lfloor n/2\right\rfloor })^{2(n-i)}\\
&  =\gamma_{n}^{2}\left(  \sum_{j=1}^{n+1-\left\lfloor n/2\right\rfloor
}j\lambda_{n}^{j-1}-\frac{1}{(1-\lambda_{n})^{2}}\right)  +\frac{\gamma
_{n}^{2}}{(1-\lambda_{n})^{2}}%
\end{align*}
where $\lambda_{n}=(1-\gamma_{\left\lfloor n/2\right\rfloor })^{2}$ and
\[
\sum_{j>0}j\lambda_{n}^{j-1}=\frac{1}{(1-\lambda_{n})^{2}}.
\]
It may be verified that
\[
\lim_{n\rightarrow\infty}\frac{\gamma_{n}^{2}}{(1-\lambda_{n})^{2}%
}=2^{-2\alpha-2}%
\]
and
\[
\lim_{n\rightarrow\infty}\gamma_{n}^{2}\sum_{j>n+1-\left\lfloor
n/2\right\rfloor }j\lambda_{n}^{j-1}=0.
\]
Hence the result follows.
\end{proof}

\end{document}